\newcommand{\mathe}{\ensuremath{\mathrm{e}}}
\newcommand{\mathi}{\ensuremath{\mathrm{i}}}
\newcommand{\transpose}[1]{\ensuremath{#1}^{\mathsf{T}}}
\newcommand{\supth}{\textsuperscript{th}\ }
\theoremstyle{plain}
\newtheoremstyle{slanted}
  {}
  {}
  {\slshape}
  {}
  {\bfseries}
  {}
  { }
  {}
\theoremstyle{slanted}
\newtheorem{thm}{Theorem}
\newtheorem{lem}{Lemma}
\newtheorem{prop}{Proposition}
\newtheorem{cor}{Corollary}
\newtheorem{claim}{Claim}
\theoremstyle{definition}
\newtheorem{defn}{Definition}
\newtheorem{obs}{Observation}
\theoremstyle{remark}
\newcolumntype{P}[1]{>{\centering\arraybackslash}p{#1}}
\newcommand{\Prob}[1]{\ensuremath\operatorname{Pr}\left( #1\right)}
\newcommand{\relmiddle}[1]{\mathrel{}\middle#1\mathrel{}}
\newcommand{\clift}[1]{\ensuremath{#1}^{c}}
\newcommand{\dlift}[1]{\ensuremath{#1}^{d}}
\newcommand{\arange}[1]{\left\{0,\ldots,{#1}-1\right\}} 
\begin{document}
\title{For every quantum walk there is a (classical) lifted Markov chain with faster mixing time}
\author{Danial Dervovic\thanks{Department of Computer Science, University College London, Gower Street, London WC1E 6BT, United Kingdom; mailto:  \href{mailto:d.dervovic@cs.ucl.ac.uk}{\texttt{d.dervovic@cs.ucl.ac.uk}}.}}
\date{\today}
\maketitle

\begin{abstract}
    Quantum walks on graphs have been shown in certain cases to mix quadratically faster than their classical counterparts. Lifted Markov chains, consisting of a Markov chain on an extended state space which is projected back down to the original state space, also show considerable speedups in mixing time. Here, we construct a lifted Markov chain on a graph with $n^2 D(G)$ vertices that mixes exactly to the average mixing distribution of a quantum walk on the graph $G$ with $n$ vertices, where $D(G)$ is the diameter of $G$. Moreover, the mixing time of this chain is $D(G)$ timesteps, and we prove that computing the transition probabilities for the lifted chain takes time polynomial in $n$. As an immediate consequence, for every quantum walk there is a lifted Markov chain with a faster mixing time that is polynomial-time computable, as the quantum mixing time is trivially lower bounded by the graph diameter.
    The result is based on a lifting presented by Apers, Ticozzi and Sarlette (\texttt{arXiv:1705.08253}). 
\end{abstract}

\section{Introduction}

Sampling from a desired probability distribution over a given state space is an important computational task, used in many diverse fields.
Markov chain methods have proven to be widely successful in this domain being used for applications such as approximating the permanent of a matrix~\cite{Sinclair1993}, machine learning~\cite{Andrieu2003} and analysing the performance of distributed systems~\cite{Molloy1982}.
In practise, many approaches suffer from a lack of provable upper bounds on the time it takes to draw samples.
One such class of methods is Markov Chain Monte Carlo (MCMC), where one conducts a specific random walk on the state space of interest for some set number of timesteps, then measures the position of the walker.
Often the user of an algorithm in the MCMC framework is unsure if the chain has \emph{mixed}, that is, is sampling the walker's position equivalent to sampling the desired distribution?
More precisely, is the distribution over the vertices close in total variation distance to the stationary distribution of the Markov chain?
In most cases, the answer to this question is unknown, the practitioner empirically determines a favourable time to run the chain for, without any theoretical guarantee of closeness to the desired distribution~\cite{Koller2009}.

One goal of Markov chain theory is to provide concrete upper bounds for mixing times of the Markov chains used in MCMC and more generally, mixing times for an arbitrary Markov chain.
In this paper we restrict the discussion to discrete state spaces, in which case a Markov chain is most naturally described as a random walk on a graph.
There is a proven lower bound on the mixing time of a Markov chain on any graph, $\Omega(1/ \Phi)$, where $\Phi$ is the conductance of the graph in question~\cite{Sinclair1993}. 
Two techniques that have been introduced in an effort to speed up mixing of Markov chains are quantum walks~\cite{Aharonov2001,Richter2007} and lifted Markov chains~\cite{Chen1999,Diaconis2000}.
With quantum walks, quantum superposition is employed to decrease mixing time, but requires use of a quantum computer to work in practise.
In lifted Markov chains, one carries out a random walk on a graph homomorphic to the original and projects down to the original graph at the end of the walk.

In this work, we construct a lifted Markov chain that mixes to the probability distribution induced by a quantum walk, Ces\`{a}ro averaged over $T \to \infty$ timesteps. 
We prove that the lifted chain mixes exactly to this distribution in time equal to the diameter of the graph upon which the quantum walk takes place.
Moreover, we show that computing the lifting takes time $O(n^8)$, where $n$ is the number of vertices in the graph.
Intuitively, this means that a lifted chain can be constructed that simulates the mixing of a quantum walk in a shorter time it takes to carry out the walk.
However, using this lifting only confers an advantage over the native quantum walk if the quantum walk takes $T=\Omega(n^8)$ timesteps, taking into account computation of the transition probabilities.
More precisely, our lifted chain mixes to the \emph{average mixing distribution} of a quantum walk of choice; the full result is given in Theorem~\ref{thm:main_result} and Corollary~\ref{cor:main_result}.
The average mixing distribution after $T$ timesteps corresponds to sampling uniformly at random a time $t \in \{0,1,\ldots, T-1\}$, running the quantum walk for $t$ timesteps, then measuring the position of the walker.
This procedure is employed instead of simply running for $T$ timesteps then measuring, as the latter process doesn't converge in the limit of infinite $T$.
This is discussed in further detail in Section~\ref{subsec:mixingtimes}.

The proof of Theorem~\ref{thm:main_result} proceeds in the following manner: we begin with a quantum walk on the graph $G$ over $T$ timesteps.
We then use a lifting defined by Apers, Ticozzi and Sarlette in~\cite{AST17-l} which we call the $d$-lifting, that allows diameter-time mixing to any distribution with full support, taking the quantum average mixing distribution as the target distribution.
We further prove that the runtime of computing this lifting is polynomial in $n$.

The layout of the paper is as follows: in Section~\ref{sec:premiliminaries} we define Markov chains, quantum walks, lifted Markov chains and make precise the notion of mixing.
In Section~\ref{sec:cycle_example} we discuss mixing on an example graph, the cycle, for clarity.
In Section~\ref{sec:equivalence} we introduce and prove the necessary ingredients for Theorem~\ref{thm:main_result}.
Section~\ref{sec:discussion} concludes the paper with discussion and open questions.

\subsection{Related Work}

Upon completion of the first version of this work (\href{https://arxiv.org/pdf/1712.02318v1.pdf}{\texttt{arXiv:1712.02318v1}}), the author became aware of the extended abstract by Apers, Sarlette and Ticozzi in~\cite{AST17-a}, which presents a similar result to the one proved here.
Their result stated that for any local-stochastic process (a quantum walk is local-stochastic) that mixes to a distribution $\pi$ in time $\mathcal{M}_{\overline{\epsilon}}$ (our notation), there is a lifted chain that mixes to $\pi$ in time $\mathcal{M}_{\overline{\epsilon}}$ with exponential convergence to arbitrary total variation distance  $\epsilon > 0$ from $\pi$.
The proof was not publicly available at that time.
They have subsequently released the paper~\cite{AST17-q} proving the result.
In the paper~\cite{AST17-q} there is no discussion of the computational complexity of their constructions; we use some of their results to show that computing the lifting presented in this paper is polynomial-time.

\section{Preliminaries}\label{sec:premiliminaries}

We denote the set of nonnegative integers by $\mathbb{Z}_+$.
For a natural number $n \in \mathbb{N}$, $[n]:=\{1,2,\dots, n\}$.
We denote by $\pm 1$ the set $\{1, -1\}$.
The function $\delta^i_j$ is the Kronecker delta, whence $\delta^i_j =1$ if and only if $i=j$.
The matrix $I_n$ is the $n \times n$ identity matrix and $\vb*{1}_n$ is the all-ones (column) vector with $n$ elements, we omit the subscript if the dimensionality is clear from context.
For a set $S$, we will write $2^S$ to denote the set of all subsets of $S$.

\subsection{Markov Chains}

Consider a directed graph $G=(V(G),E(G))$ on $n$ vertices with vertex set $V(G)=[n]$ and arc set $E(G)\subseteq V(G) \times V(G)$.
We call $G$ a \emph{symmetric} directed graph if $(x,y) \in E(G) \Leftrightarrow (y,x) \in E(G)$ for all $x,y \in V(G)$ and say that $x$ and $y$ are \emph{adjacent}.
We denote by $x \sim y$ the adjacency of two vertices $x$, $y$.
A graph is called \emph{$m$-regular} if every vertex has $m$ neighbours, where $m\in\arange{n}$.

We can define a \emph{discrete-time Markov chain} $M_G$ on the vertices of $G$ as follows:
Let $X(t)$ be a random variable, where $X(t) \in V(G)$, for all $t\in \mathbb{Z}_+$.
The Markov chain $M_G$ is the sequence of states $(X(0), X(1), \ldots)$, that additionally satisfies the following properties:
\emph{i.} The starting state $X(0)$ is distributed according to an initial probability distribution $p^{(0)}$.
\emph{ii.} The probability of observing state $X(t)$ is independent of all previous states, apart from its immediate predecessor, $X(t-1)$.
\emph{iii.} The \emph{transition probability} between states $i,j \in V(G)$, $\Prob{ X(t+1) = i \mid X(t) = j } > 0$ if and only if $(j,i)\in E(G)$. 

From the above definition, each arc $(i,j)$ in $G$ has an associated transition probability from vertex $i$ to vertex $j$. 
The arc probabilities must be nonnegative and the sum of the probabilities leaving a given vertex must equal one.
These transition probabilities are listed in the matrix $P\in \mathbb{R}^{n\times n}$, where
\begin{equation}
    P_{ij} = \Prob{ X(t+1) = i \mid X(t) = j },\ i,j\in [n].
\end{equation}
This matrix must satisfy
\begin{equation}\label{eq:mrkvchaincond}
    P_{ij} \geq 0, \qq{} \transpose{\vb*{1}} P = \transpose{\vb*{1}}, \qq{} P_{ij} = 0 \Leftrightarrow (j,i)\notin E(G).
\end{equation}
The conditions in Eq.~\eqref{eq:mrkvchaincond} state that $P$ must be a column-stochastic matrix with support only on elements corresponding to arcs in the the graph $G$.
From the above, at a time $t$, the distribution over states will be
\begin{equation}\label{eq:ClMrkvEvolv}
    p^{(t)} = P^t p^{(0)},
\end{equation}
where a probability distribution $p$ is represented as a column vector.
In other words, $X(t)$ is a random variable distributed according to $p^{(t)}= P^t p^{(0)}$.

We see that a Markov chain $M_G$ on a finite, directed graph $G$ can be completely characterised by the transition matrix $P$ and the initial state distribution $p^{(0)}$, so we shall use the shorthand $M_G = (P, p^{(0)})$.
Note that we will use the terms `Markov chain' and `random walk' interchangeably.
Often a Markov chain starts at a particular vertex $j$, in which case $p^{(0)} = \vb{e}_j$, where $\vb{e}_j$ is the $j$\supth standard basis vector.
We also note that often in the literature, probability vectors $p^{(t)}$ are row vectors and transition matrices act from the right, in contrast to our definitions.

\subsection{Lifted Markov Chains}\label{sec:liftedMarkovchains}

Lifted Markov chains were first introduced by Chen, Lov\'{a}sz and Pak in \cite{Chen1999} as a mechanism to reduce the mixing time of a Markov chain on a given graph $G$.
A graph $\clift{G}$ is a \emph{lift} of $G$ if there exists a homomorphism $c: \clift{G} \to G$.
Following Apers, Ticozzi and Sarlette~\cite{AST17-l}, we denote by $c^{-1}: V(G)\to 2^{V(\clift{G})} $ the map that takes as input the vertex $k \in V(G)$ and outputs the set of nodes $j\in V(\clift{G})$ for which $c(j)=k$.
The homomorphism $c$ induces a linear map from $V(\clift{G})$ into $V(G)$, which we can represent using the matrix $C$ with elements
\begin{equation}
    C_{i,j} = 
    \begin{cases}
        1, & \text{if } c(j) = i; \\
        0, & \text{otherwise},
    \end{cases}
\end{equation}
where $i\in V(G)$, $j \in V(\clift{G})$.
We can now define a lifted Markov chain.

\begin{defn}
    \emph{($c$-lifted Markov chain)}
    Let $G$ be a finite, directed graph and let $M_G=(P, p^{(0)})$ be a Markov chain on $G$.
    Furthermore, th graph $\clift{G}$ is a lift of $G$ via a homomorphism $c: \clift{G} \to G$.
    A \emph{$c$-lifted Markov chain} for $M_G$, $\clift{M_G}$, is the Markov chain $(\clift{P}, \clift{p}{}^{(0)})$ on the graph $\clift{G}$ that satisfies the following properties:
    \begin{enumerate}
        \item The transition matrix $\clift{P}$ satisfies $P C = C \clift{P}$.
        \item The initial distribution $\clift{p}{}^{(0)}$ satisfies $p^{(0)} = C\cdot  \clift{p}{}^{(0)}$.
    \end{enumerate} 
\end{defn}
In the notation of category theory, one can say that that the following diagram commutes:
\begin{equation*}
\begin{tikzcd}
V(G) \arrow{r}{P}  & V(G) \\%
V(G^c) \arrow{r}{P^c} \arrow{u}{C} & V(G^c) \arrow{u}{C} 
\end{tikzcd}
\ \  .
\end{equation*}
The lifted Markov chain $\clift{M}_G$ proceeds in the usual way, by repeated application of $\clift{P}$.
The probability distribution over $V(G)$ is given at time $t$ by the marginal $p^{(t)} = C (\clift{p})^{(t)}$.
We shall call $M_G$ the \emph{coarse-grained chain} with respect to the lifted chain $\clift{M_G}$. 

The definition of a $c$-lifting gives some freedom for the form of $\clift{P}$ and $\clift{p}{}^{(0)}$, even for a fixed homomorphism $c$.
Usually, we will specify the graph $\clift{G}$, transition matrix $\clift{P}$ and initial distribution $\clift{p}{}^{(0)}$ and refer to this specific configuration as \emph{the} $c$-lifting. 

\subsection{Coined Quantum Walks}

Suppose we have a $m$-regular graph $G$.
Define a Hilbert space associated to the vertices of $G$, $\mathcal{H}_{V(G)} = \operatorname{span}(\{\ket{v}\}_{v\in V(G)})$.
Also define a Hilbert space associated to the coin $\mathcal{H}_{C} = \operatorname{span}(\{\ket{k}\}_{k=1}^m )$.
Our quantum walk acts on the Hilbert space $\mathcal{H}_C \otimes \mathcal{H}_{V(G)}$.

We need two unitary operators to define a coined quantum walk, the \emph{coin operator} and the \emph{shift operator}.
We introduce the coin first: the coin $C$ is a unitary operator on $\mathcal{H}_C$.
A common coin operator is the \emph{Hadamard coin}, $H_m$, given by
\begin{equation}
    H_m = \frac{1}{\sqrt{m}} \sum_{j \in [m]} \sum_{ k\in [m]} \omega^{(j-1)(k-1)} \ketbra{j}{k},
\end{equation}
where $\omega:= \mathe^{\frac{2\pi \mathi}{m}}$.
We call a coined quantum walk utilising the Hadamard coin a \emph{Hadamard walk}.

We need one more piece to define a coined quantum walk, the shift operator $S$, for which we use the description of Godsil~\cite{Godsil2017}.
First, for each vertex $u$ we must specify a linear order on its neighbours
\begin{equation}
    f_u: \{ 1,2,\ldots, m \} \to \{ v: (u,v)\in E(G) \}.
\end{equation}
The vertex $f_u(j)$ will be referred to as the $j$\supth neighbour of $u$ and the arc $(u, f_u(j))$ the $j$\supth arc of $u$.
For each vertex $u$, the shift operator $S$ maps its $j$\supth arc to the $j$\supth neighbour of $f_u(j)$, i.e.\  $S(\ket{j} \otimes \ket{u}) = \ket{j} \otimes \ket{f_u(j)}$.

We can now construct one step of a coined quantum walk, described by the unitary operator $U = S\cdot (C \otimes I_{\mathcal{H}_{V(G)}})$.
An initial state of the walk is some unit vector $\ket{\psi(0) } \in \mathcal{H}_C \otimes \mathcal{H}_{V(G)}$, typically a basis state $\ket{k , v}$ for some $k\in [m]$, $v \in V(G)$, where we abbreviate $\ket{k}\otimes\ket{v}$ as $\ket{k,v}$.
The state after $t$ timesteps is $\ket{\psi_t} = U^t \ket{\psi(0) }$.
Thus we can totally characterise the state of a quantum walk by the tuple $(U, \ket{\psi(0) })$.

Following Aharonov \emph{et. al.}~\cite{Aharonov2001}\footnote{They use $P_t(v | \psi(0) )$, which we change to avoid notational clashes.}, we denote by $Q_t(v | \psi(0) )$ the probability of measuring the vertex $v$ at time $t$ of the quantum walk, contingent on the initial state being $\ket{\psi(0)}$.
More concretely,
\begin{equation}\label{eq:node_prob}
    Q_t(v | \psi(0) ) = \sum_{k \in [m]} \abs{ \bra{k , v} U^t\ket{\psi(0) }}^2.
\end{equation} 
We denote by $Q_t(\,\cdot\, | \psi(0) )$ the induced probability distribution over the vertices. 

In fact, we can define a \emph{general quantum walk} also, as in~\cite{Aharonov2001}.
In this case we relax the requirement of the exact form that $U$ can take, merely that $U$ must respect the structure of the graph.
More precisely, for any $k, v$, the quantity $U \ket{k,v}$ only contains basis states $\ket{k', v'}$ with $v' \in N(v) \cup \{v\}$, where $N(v)$ is the \emph{neighbourhood} of $v$, $N(v) = \{ u \,|\, (v,u) \in E(G) \}$.
The results we prove later hold for this general class of quantum walk.

\subsection{Mixing Times}\label{subsec:mixingtimes}

Suppose we have a Markov chain $M_G=(P, p^{(0)})$ over a finite directed graph $G$ and a probability distribution on the states of $M_G$, $\pi$ such that $P\pi = \pi$.
Then we call $\pi$ a \emph{stationary distribution} of $M_G$.
Indeed, $\pi$ exists and is unique if $M_G$ is \emph{irreducible} and \emph{aperiodic}~\cite{Levin2009}.
Moreover, an irreducible, aperiodic Markov chain always converges to the stationary distribution, that is, $\lim_{t\to\infty} P^t p^{(0)} = \pi$; a result known as the \emph{convergence theorem} in the literature.
Moreover, all of the elements of $\pi$ in this case are strictly positive.
Irreducibility of $M_G$ is equivalent to saying that the graph $G$ is connected.
The chain $M_G$ is aperiodic if there exists some time $T_0$ such that for all $t \geq T_0$ and all vertices $i,j \in V(G)$, $[P^t]_{i,j} > 0$.
A Markov chain that is irreducible and aperiodic is called \emph{ergodic}.

We now define the \emph{mixing time}, $\mathcal{M}_{\epsilon}$, of $M_G$, for $\epsilon >0$, 
\begin{equation}
    \mathcal{M}_{\epsilon} = \max_{p^{(0)} \in \mathcal{P}} \min_{T \in \mathbb{Z}_+} \qty{ T \relmiddle| \forall t \geq T,\ \norm{P^t p^{(0)} - \pi}_{TV} \leq \epsilon },
\end{equation}
where $\norm{\,\cdot\,}_{TV}$ is the total variation distance between two distributions $\pi$ and $\kappa$, that is, $\norm{\pi - \kappa}_{TV} = \frac{1}{2}\sum_i |\pi_i - \kappa_i| $ and $\mathcal{P}$ is the allowed domain of initial starting states.
Intuitively, the mixing time is the number of steps it takes for an arbitrary starting state to be $\epsilon$-close in total variation distance to the stationary distribution in the worst case.
Typically, $\mathcal{P}$ is the set of distributions with all probability mass on one and only one state, i.e.\  $\mathcal{P}=\qty{\vb{e}_i \relmiddle| i\in V(G) }$.
We can do this without loss of generality, since it can be shown that $\max_{x\in V(G)} \norm{P^t \vb{e}_x - \pi}_{TV} = \sup_{\mu} \norm{P^t \mu - \pi}_{TV}$, where $\mu$ is any probability distribution on $V(G)$~\cite[Exercise 4.1]{Levin2009}.

We say that the chain $M_G$ has \emph{mixed} at a time $T$ if  $\norm{P^T p^{(0)} - \pi}_{TV} \leq \epsilon$; from submultiplicativity of the $\ell_1$-norm the chain will be mixed for all $t\geq T$.
By convention, we shall often take $\epsilon = 1/4$.
Indeed, from \cite[Eq. (4.36)]{Levin2009} we can easily get a bound for arbitrary $0 < \epsilon < 1/4$, 
\begin{equation}
    \mathcal{M}_\epsilon \leq \left\lceil \log_2 \frac{1}{\epsilon} \right\rceil \mathcal{M}_{1/4}.
\end{equation}

The mixing time is strongly related to a topological property of the Markov chain called the \emph{conductance}.
We must first define the conductance of a Markov chain $(P,p^{(0)})$ on $G$. For a subset $X\subseteq V(G)$ let $\pi(X) = \sum_{i\in X}\pi_i$, where $\pi$ is the stationary distribution under $P$. The conductance $\Phi(P)$ of $P$ is defined as
\begin{equation}\label{eq:conductance}
    \Phi(P) = \min_{X\subset V(G);\, \pi(X)\leq \frac{1}{2} } \frac{\sum_{i\in X,\, j\not\in X} P_{j,i} \pi_i }{\pi(X)}.
\end{equation}
Often, the numerator of Eq.~\eqref{eq:conductance} is referred to as the \emph{flow} through $X$ and the denominator as the \emph{capacity} of $X$.
The conductance gives a measure of how hard it is to leave a small subset of vertices, minimised over the graph (where by small we mean less than half of the vertices).
Given only a graph $G$ and a target stationary distribution $\pi$, the \emph{conductance $\Phi$ of $G$ towards $\pi$} is the maximum of $\Phi(P)$ over all stochastic $P$ that satisfy the locality constraints of $G$ and whose unique stationary distribution is $\pi$.

Sinclair provided the following relationship between the conductance and mixing time~\cite[Eq. (2.13)]{Sinclair1993}
\begin{equation}\label{eq:conductance_mixing_time}
    \frac{1-2\Phi(P)}{2\Phi(P)} \log \frac{1}{\epsilon} \leq
    \mathcal{M}_\epsilon \leq
    \frac{2}{\Phi(P)^2}\qty( \log \frac{1}{\epsilon} + \log(\frac{1}{\min_i \pi_i}) ).    
\end{equation}
\begin{obs}
    The mixing time of a Markov chain is bounded between $\Omega(1/\Phi)$ and $O(1/\Phi^2)$.   
\end{obs}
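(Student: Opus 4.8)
The plan is to extract both asymptotic bounds directly from Sinclair's two-sided inequality~\eqref{eq:conductance_mixing_time}, regarding the accuracy parameter $\epsilon$ as a fixed constant (for instance the conventional choice $\epsilon = 1/4$) and regarding the chain — hence its stationary distribution $\pi$ — as fixed, so that $\log(1/\epsilon)$ and $\log(1/\min_i \pi_i)$ are positive constants that do not scale with $\Phi = \Phi(P)$.

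First I would handle the lower bound. The left-hand side of~\eqref{eq:conductance_mixing_time} reads $\mathcal{M}_\epsilon \geq \frac{1 - 2\Phi}{2\Phi}\log\frac{1}{\epsilon}$; as $\Phi \to 0$ the prefactor $\frac{1-2\Phi}{2\Phi}$ grows like $\frac{1}{2\Phi}$, so $\mathcal{M}_\epsilon = \Omega(1/\Phi)$. To make the hidden constant explicit, note that for any $\Phi \leq 1/4$ we have $1 - 2\Phi \geq 1/2$, whence $\mathcal{M}_\epsilon \geq \frac{\log(1/\epsilon)}{4\,\Phi}$; and when $\Phi$ is bounded below by a constant the assertion $\mathcal{M}_\epsilon = \Omega(1/\Phi)$ carries no content, so it holds either way.

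Next the upper bound. The right-hand side of~\eqref{eq:conductance_mixing_time} gives $\mathcal{M}_\epsilon \leq \frac{2}{\Phi^2}\bigl(\log\tfrac1\epsilon + \log\tfrac{1}{\min_i \pi_i}\bigr)$. The bracketed quantity is a constant $c > 0$ independent of $\Phi$, so $\mathcal{M}_\epsilon \leq 2c/\Phi^2 = O(1/\Phi^2)$, which is the claimed upper bound.

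The only point requiring care — and essentially the whole of it — is the bookkeeping of the hidden constants: the $\Omega$ and the $O$ absorb the factors $\log(1/\epsilon)$ and $\log(1/\min_i \pi_i)$, so the Observation should be read as describing the scaling of $\mathcal{M}_\epsilon$ in $\Phi(P)$ for a fixed accuracy target and a fixed chain, rather than as a bound uniform over all chains simultaneously. Since~\eqref{eq:conductance_mixing_time} is quoted as given, no further argument is needed.
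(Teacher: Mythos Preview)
Your proposal is correct and matches the paper's approach exactly: the Observation is stated in the paper immediately after Eq.~\eqref{eq:conductance_mixing_time} without any further proof, so it is simply read off from Sinclair's two-sided inequality with $\epsilon$ and $\min_i \pi_i$ treated as constants. Your additional bookkeeping (e.g.\ the case split on $\Phi \leq 1/4$) is more detail than the paper provides, but the underlying idea is identical.
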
 

Suppose we have a lifted Markov chain $\clift{M}_G$ lifted from $M_G$, with the lifted graph $\clift{G}$ related to $G$ via the homomorphism $c$.
We define the \emph{mixing time of the marginal}, $\clift{\mathcal{M}}_{\epsilon}$, of $\clift{M}_G$, for $\epsilon >0$ as 
\begin{equation}\label{eq:marginal_mixing_time}
    \clift{\mathcal{M}}_{\epsilon} = \max_{\clift{p}{}^{(0)} \in \clift{\mathcal{P}}} \min_{T \in \mathbb{Z}_+} \qty{ T \relmiddle| \forall t \geq T,\ \norm{C \cdot (\clift{P})^t \cdot \clift{p}{}^{(0)} - \pi}_{TV} \leq \epsilon },
\end{equation}
where $\pi$ is the stationary distribution of $M_G$, $\clift{\mathcal{P}}$ is the set of allowed starting distributions of $\clift{M}_G$ and $C$ is the linear map induced by the homomorphism $c$.
Note that $\clift{\mathcal{M}}_{\epsilon} \leq \mathcal{M}_{\epsilon}$ for all $\epsilon$.
This comes from the following: we do not set $\clift{\mathcal{P}}$ as all basis states in the lifted state space, analogously to the definition of $\mathcal{M}_\epsilon$ (indeed, if this were the case we would have equality for all $\epsilon$).
Instead, we are allowed to choose a mapping from the initial state on the coarse-grained Markov chain to an initial state on the lifted chain.
The set $\clift{\mathcal{P}}$ is then the image of $\mathcal{P}$ under this mapping.
The map is chosen so as to prune the `bad' starting states from $\clift{\mathcal{P}}$ and give a faster mixing time of the marginal, yielding the inequality.
An important result of~\cite{AST17-l} is that for a lifted Markov chain to give any speedup over its coarse-grained chain we must be allowed to choose this initialisation mapping.

We shall say for a lifted chain $\clift{M_G}$ that the \emph{marginal has mixed} at a time $T$ when $\norm{C \cdot (\clift{P})^T \cdot \clift{p}{}^{(0)} - \pi}_{TV} \leq \epsilon$.
Furthermore, a lifted chain may have a marginal that has mixed without itself mixing, that is, $C (\clift{P})^t \clift{p}{}^{(0)}$ will converge to $\pi$ but $(\clift{P})^t \clift{p}{}^{(0)}$ won't necessarily converge to its stationary distribution, $\clift{\pi}$; indeed $\clift{\pi}$ doesn't even have to exist~\cite{AST17-l}.

In~\cite{Chen1999}, the authors show that for any Markov chain $M_G$, there exists a lifted chain satisfying
\begin{equation}
      \clift{\mathcal{M}}_{1/4} \geq \frac{1}{2 \Phi} 
\end{equation}
with an upper bound of $O\qty(\log(\frac{1}{\min_i \pi_i}) \frac{1}{\Phi} )$ in the case of \emph{reversible chains}, Markov chains where the flow through any cut $X\subseteq V(G)$ is the same in both directions.
In~\cite{AST17-l}, they improve the upper bound (extending to any Markov chain) to
\begin{equation} 
    \clift{\mathcal{M}}_{1/4} \leq \frac{4 \log (1/ \min_i \pi_i )}{\Phi(P)} + 2.
\end{equation}
For the bounds above, the proofs show existence of these optimal liftings, but do not provide an efficient (that is, polynomial-time) procedure to construct the lifting.
The optimal lifting in~\cite{Chen1999} relies on the solution of an $\NP$-hard problem.
Note also that these bounds hold for the case when $\mathcal{P}^c$ is taken as the set of distributions with all probability mass on any basis state in the lifted space.
We shall see later in Section~\ref{sec:d_delta_lift} that these bounds can be beaten.

Now let us consider a quantum walk on the $m$-regular graph $G$.
For a quantum walk, unitarity prevents the state itself from converging, by the following:
the $\ell_1$-norm distance between consecutive states in a quantum walk is constant, as the walk operator is unitary. Thus the limit $\lim_{t \to \infty} U^t \ket{\psi(0) }$ does not exist in general, as for convergence we demand that the distance decreases with an increasing number of timesteps.
Perhaps more naturally, we can consider the convergence of the induced probability distribution over the nodes, $Q_t(\,\cdot\, | \psi(0) )$.
We can see that this distribution does not converge either, using the following argument from~\cite{Aharonov2001}.
As the quantum walk operator $U$ is unitary, it has eigenvalues of the form $\mathe^{\mathi \theta}$.
For any $\epsilon >0$, there exists some finite $t$ for which $\abs{1 - \mathe^{\mathi \theta t}}\leq \epsilon$ for all eigenvalues $\theta$.
Thus, $U^t \ket{\psi(0) }$ can be made arbitrarily close to $\ket{\psi(0) }$ for infinitely many times $t$.
Unless $U \ket{\psi(0) } = \ket{\psi(0) }$, the walk is \emph{periodic} and $Q_t(\,\cdot\, | \psi(0) )$ does not converge. 

However, we can talk about the well-defined notion of \emph{average mixing}.
Consider the average of $Q_t(v | \psi(0) )$ over the first $T$ timesteps, $\overline{Q}_T(v | \psi(0) ) = \frac{1}{T} \sum^{T-1}_{t=0} Q_t(v | \psi(0) )$.
The limit $\lim_{T \to \infty} \overline{Q}_T(v | \psi(0) )$ exists for any $v$ and $\ket{\psi(0) }$~\cite{Aharonov2001}.
We denote by $\pi^q_{\psi(0)}$ the distribution $\lim_{T \to \infty} \overline{Q}_T(\, \cdot \, | \psi(0) )$, in analogy with classical case and refer to it as the limiting distribution of the quantum walk.
One can easily sample from the distribution $\overline{Q}_T( \,\cdot \, | \psi(0) )$ using the following procedure.
Choose a time $t\in \arange{T}$ uniformly at random, run the quantum walk for $t$ timesteps, then measure which node the walker is at.
The vertex will be distributed according to $\overline{Q}_T(\,\cdot \,| \psi(0) )$.
We will refer to $\overline{Q}_T(\,\cdot \,| \psi(0) )$ as the \emph{average mixing distribution} for the quantum walk $(U, \ket{\psi(0) })$.

We are now in a position to define the \emph{quantum mixing time},
\begin{equation}
    \mathcal{M}_{\epsilon}^q = \max_{\ket{\psi(0) } \in \varPsi } \min_{T \in \mathbb{Z}_+} \qty{ T \relmiddle| \forall t \geq T,\ \norm{\overline{Q}_T(\,\cdot \,| \psi(0) ) - \pi^q_{\psi(0)}}_{TV} \leq \epsilon },
\end{equation}
where $\varPsi$ is the allowed set of starting states, typically the basis vectors $\{ \ket{k,v} \mid k \in [m],\ v\in V(G) \}$.
Aharonov and colleagues~\cite{Aharonov2001} prove a general lower bound on the quantum mixing time $\mathcal{M}_{\epsilon}^q = \Omega(1/\Phi)$.

\section{An Example: Mixing on \boldmath{$C_n$}}\label{sec:cycle_example}

Here we take the $n$-cycle, $C_n$, to be the graph with vertex set $V(C_n) = \arange{n}$ and arc set $\{(i, i \pm 1 \operatorname{mod} n) \mid i\in\arange{n} \}$.

Consider the Markov chain $M_{C_n}$ on $C_n$, that has an arbitrary starting state in $V(C_n)$ and transition probabilities of $1/2$ on each arc.
It is well known that the mixing time of this Markov chain is quadratic in $n$ for odd $n$, i.e.\  $\mathcal{M}_{\epsilon} = \Theta(n^2 \log(1/\epsilon))$ and is undefined for even $n$.
For the cycle these transition probabilities are optimal for mixing.

We can consider a lift of this chain first considered by Diaconis, Holmes and Neal, the \emph{Diaconis lift}~\cite{Diaconis2000}.
For each vertex $i\in \arange{n}$, we augment with the pair of vertices $(\pm 1, i)$ so that $c:(\pm 1, i)\mapsto i$ and $V(\clift{C}_n) = \{ (s,k)\mid k \in \arange{n},\ s\in \pm 1 \}$.
The arc set $E(\clift{C}_n) = \{ ((s', k \pm 1 \operatorname{mod} n), (s,k)) \mid  k \in \arange{n},\ s,s'\in \pm 1  \}$.

\begin{figure}  
    \centering
    \includegraphics[width=0.65\textwidth]{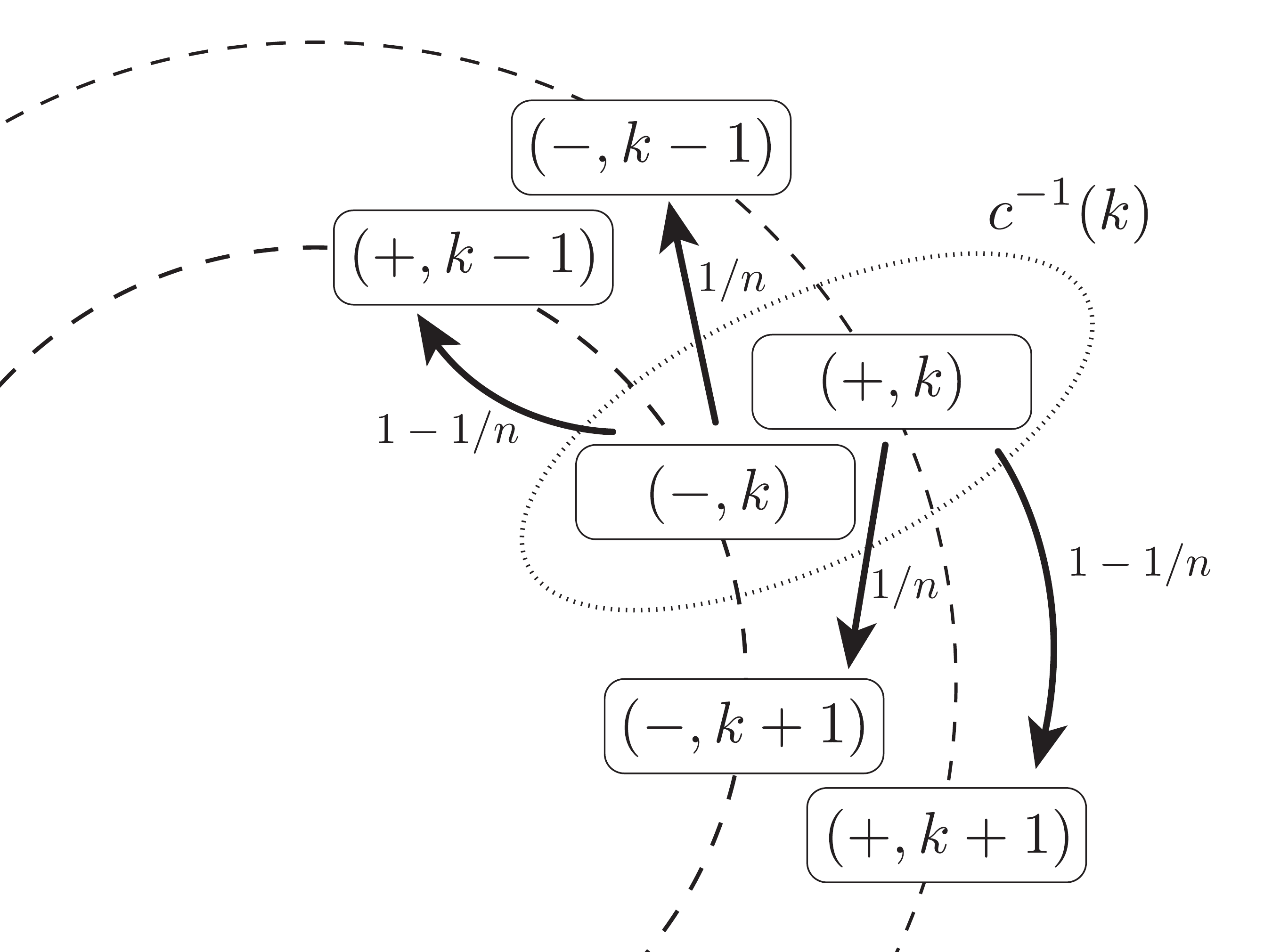}
    \caption{Illustration of the lifted Markov chain on $\clift{C}_n$.
    This figure is similar to Figure 1 in~\cite{AST17-l}.
    Each vertex $k$ is lifted to the pair of vertices $(k,-)$, $(k,+)$.
    The arrows indicate the outgoing transition probabilities from the lifted vertices of $k$, $c^{-1}(k)$.
    Note there is high probability to maintain the `sign' of the vertices as the walk progresses.}
    \label{fig:diaconislift}  
\end{figure}

The transition probabilities of the chain are as follows:
\begin{equation}
    [\clift{P}]_{i,j} = 
    \begin{cases}
        1 - 1/n, & i = (s, k),\ j = (s, k + s \operatorname{mod} n) ;\\
        1/n, & i = (s, k),\ j = (1 - s, k + s \operatorname{mod} n); \\
        0, & \text{otherwise},
    \end{cases} 
\end{equation}
where $s\in \pm 1,\ k\in \arange{n}$. 
Figure~\ref{fig:diaconislift} shows the allowed transitions and associated probabilities.
This chain has been shown to have mixing time of the marginal $\Theta(n)$ (for fixed $\epsilon$), displaying a quadratic speedup over the non-lifted chain~\cite{Diaconis2000}.
This choice of transition probabilities imposes some kind of `inertia' on the walk, in that if the walker takes a step (anti-)clockwise around the cycle, it is far more likely to take the next step (anti-)clockwise around the cycle. 

The mixing of a coined quantum walk on $C_n$ has also been studied, in~\cite{Aharonov2001}.
More concretely they perform the Hadamard walk on a Hilbert space isomorphic to $\mathbb{C}^2 \otimes \mathbb{C}^n$.
The basis states for the coin space are $\{\ket{L}, \ket{R} \}$, standing for `left' and `right'.
The coin operator is $C = H_2 = \frac{1}{\sqrt{2}}\smqty[1 & \phantom{-}1 \\ 1 & -1]$ and the shift operator $S$ acts as 
\begin{equation}
    \begin{aligned}
        S \ket{L, i} &= \ket{ L , i - 1 \operatorname{mod} n}; \\
        S \ket{R, i} &= \ket{ R , i + 1 \operatorname{mod} n}.
    \end{aligned}
\end{equation}
It is shown for this walk that the mixing time $\mathcal{M}^q_\epsilon = O(n \log(n) \frac{1}{\epsilon^3})$, demonstrating quadratic speedup in mixing as compared with the classical walk on the cycle (for fixed precision).
Interestingly, this speedup is seen in the lifted Markov chain also.
We also note that the inverse polynomial dependence on $\epsilon$ can be made inverse polylogarithmic using an amplification scheme detailed in~\cite{Aharonov2001}.

\section{An Equivalence Between Lifted Walks and Coined Quantum Walks}\label{sec:equivalence}

In this section we prove the main result of the paper and introduce the main ingredients for the proof.
In Section~\ref{sec:d_delta_lift} we introduce the lifting that will be used.
In Sections~\ref{sec:d_lift_complexity}~and~\ref{sec:pi_q_complexity} we consider the computational complexity of computing the lifting and the quantum average mixing distribution respectively.

\subsection{The \boldmath{$d$}-lifting}
\label{sec:d_delta_lift}

We now introduce another lift, which we call the \emph{$d$-lifting}, due to Apers, Ticozzi and Sarlette~\cite{AST17-l}. 
We shall omit full details of how the lift is constructed and the homomorphism $d $ for brevity. 
First, we need the following definitions. 
Let $G$ be a connected, directed graph.
The \emph{distance}, $d(u,v)$, between the nodes $u$ and $v$ in $G$ is the shortest length path between them.
The \emph{diameter} of $G$, $D(G)$, is the greatest distance between any pair of vertices in $G$, or rather
\begin{equation}
    D(G):= \max_{u,v \in V(G)} d(u,v).    
\end{equation}
The \emph{tensor product} of graphs $G$ and $H$, denoted by $G\otimes H$, has vertex set $V(G)\times V(H)$ and an arc $( (i, j),(k, l))$ if and only if$( i, k)\in E(G)$ and $(j, l)\in E(H)$.

\begin{prop}\label{prop:ATS_lift}
    \emph{($d $-lifting~\cite[Theorem 2]{AST17-l})}
    Let $M_G=(P,p^{(0)})$ be a Markov chain on a connected graph $G$ on $n$ vertices.
    Moreover, $P$ has stationary distribution $\pi$ with all strictly positive elements.
    Then, there exists an $d $-lifted Markov chain, $\dlift{M_G}$, on a graph $\dlift{G}$ having $D(G)\cdot n^2$ vertices, for which $\dlift{\mathcal{M}}_{\epsilon} \leq D(G)$, where $D(G)$ is the diameter of $G$ and $\epsilon > 0$ is arbitrary. 
\end{prop}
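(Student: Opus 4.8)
The plan is to exhibit an explicit $d$-lifting — the lifted graph $\dlift{G}$, the homomorphism $d$, the transition matrix $\dlift{P}$, and the initialisation map $\mathcal{P}\to\dlift{\mathcal{P}}$ — and then check three things: that $\dlift{G}$ has $n^2 D(G)$ vertices, that $(\dlift{P},\dlift{p}{}^{(0)})$ is a genuine lift of $M_G$, and that its marginal equals $\pi$ exactly after $D(G)$ steps. For the construction I would first fix, once and for all, a shortest $a$--$b$ path in $G$ for each ordered pair $(a,b)\in V(G)\times V(G)$, and let $\nu(a,b)$ be the second vertex on it, with $\nu(a,a):=a$; these exist because $G$ is connected. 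Put $V(\dlift{G}) = V(G)\times V(G)\times\arange{D(G)}$, reading a vertex $(a,w,\tau)$ as ``the walker sits at $a$, is heading for the target $w$, and the clock reads $\tau$''. Let $d$ be the projection $(a,w,\tau)\mapsto a$, so that $d^{-1}(a)=\{a\}\times V(G)\times\arange{D(G)}$ and $\lvert V(\dlift{G})\rvert = n^2 D(G)$. Let $\dlift{P}$ be deterministic: from $(a,w,\tau)$ move to $(\nu(a,w),\,w,\,\min\{\tau+1,\,D(G)-1\})$ — one greedy step along the stored shortest path toward $w$, incrementing and then freezing the clock. The initialisation map sends a start vertex $v$ to $\dlift{p}{}^{(0)}_v := \sum_{w\in V(G)}\pi_w\,\vb{e}_{(v,w,0)}$, and $\dlift{\mathcal{P}}$ is the image of this map over $v\in V(G)$.

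Condition~2 of the definition of a $d$-lifted chain is then immediate: $C\,\dlift{p}{}^{(0)}_v=\sum_w\pi_w\,\vb{e}_v=\vb{e}_v=p^{(0)}$. For the marginal, the key observation is that $\nu(a,w)$ being the second vertex of a \emph{shortest} path gives $d(\nu(a,w),w)=d(a,w)-1$ whenever $a\neq w$, so a walker started in state $(v,w,0)$ is at distance exactly $\max\{0,\,d(v,w)-t\}$ from $w$ after $t$ steps and, since $\nu(w,w)=w$, sits at $w$ for every $t\ge d(v,w)$. As $d(v,w)\le D(G)$ for all $w$, every walker has reached its target after $D(G)$ steps, whence for all $t\ge D(G)$
\begin{equation*}
    C\,(\dlift{P})^t\,\dlift{p}{}^{(0)}_v=\sum_{w}\pi_w\,\vb{e}_w=\pi,
\end{equation*}
so $\norm{C(\dlift{P})^t\dlift{p}{}^{(0)}_v-\pi}_{TV}=0\le\epsilon$; maximising over the structurally identical start vertices $v$ gives $\dlift{\mathcal{M}}_\epsilon\le D(G)$ for every $\epsilon>0$. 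It remains to confirm that $(\dlift{P},\dlift{p}{}^{(0)})$ meets the defining conditions of a $d$-lifted chain — that $\dlift{P}$ is column-stochastic with support exactly $E(\dlift{G})$ and is compatible with $P$ and $C$ through $d$ — which is the direct verification carried out in~\cite{AST17-l}; this is also where the hypothesis that $\pi$ is strictly positive enters (Apers--Ticozzi--Sarlette state the result for mixing to an arbitrary full-support target), even though the routing argument above uses only connectedness of $G$.

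The main obstacle is not any single computation but arranging the clock register so that two competing demands hold at once: the clock must take only $D(G)$ distinct values, so that $\lvert V(\dlift{G})\rvert = n^2 D(G)$ rather than $n^2(D(G)+1)$, yet the marginal must equal $\pi$ \emph{exactly} for \emph{all} $t\ge D(G)$, not merely be close to it or coincide with it only at $t=D(G)$. This is what forces the ``freeze-and-absorb'' behaviour — the clock caps at $D(G)-1$ while each walker becomes stationary at its target — and it makes the case $d(v,w)=D(G)$ the one boundary I would check by hand: after $D(G)-1$ steps such a walker sits at distance $1$ from $w$ with the clock already capped, and the transition still applies $\nu(\cdot,w)$, which sends a distance-$1$ vertex to $w$, so arrival happens precisely at step $D(G)$. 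A minor loose end is that $\dlift{P}$ must also be defined on the ``unreachable'' states $(a,w,D(G)-1)$ with $d(a,w)\ge 2$, which are never visited from a valid initial distribution; any stochastic choice there works and leaves the marginal untouched. Finally it is worth noting, as the paper does elsewhere, that $\dlift{P}$ has no stationary distribution of its own (it has absorbing states), so the claim genuinely concerns the finite-time convergence of the \emph{marginal} $C(\dlift{P})^t\dlift{p}{}^{(0)}$ to the stationary distribution of the \emph{base} chain.
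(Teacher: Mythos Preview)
Your construction is correct but genuinely different from the one the paper quotes from \cite{AST17-l}. In the paper's $d$-lifting the auxiliary coordinate $v_0$ records the \emph{starting} vertex, the initialisation is deterministic ($\vb{e}_v\mapsto\vb{e}_{(0,v,v)}$), and all the randomness sits in the transitions: a time-inhomogeneous \emph{stochastic bridge} $\{P(i)\}_{i=1}^{D(G)}$ carries $\vb{e}_{v_0}$ to $\pi$ over $D(G)$ steps, after which the dynamics revert to $P$. Your lifting is the dual picture: the auxiliary coordinate $w$ records a \emph{target} vertex, the randomness is front-loaded into the initialisation $\vb{e}_v\mapsto\sum_w\pi_w\,\vb{e}_{(v,w,0)}$, and the dynamics are deterministic shortest-path routing. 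Both give $n^2D(G)$ lifted states and an exact marginal of $\pi$ after $D(G)$ steps; yours is more elementary because it bypasses the stochastic-bridge and max-flow machinery entirely. The trade-off is that this machinery is precisely what the paper analyses downstream in Lemma~\ref{lem:d_lift_runtime}, so your chain is not literally ``the $d$-lifting'' whose complexity is bounded there.

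There is one real loose end. You defer the check that $(\dlift{P},\dlift{p}{}^{(0)})$ meets the defining conditions of a lifted chain to \cite{AST17-l}, but that reference carries out the verification for \emph{their} construction, not yours, so the citation does not discharge the obligation. Two points would need attention if you did it yourself: first, your freeze-at-target rule $\nu(w,w)=w$ forces a self-loop at every vertex for $d$ to be a graph homomorphism, which is not among the hypotheses; second, the intertwining relation $PC=C\dlift{P}$ applied to $\vb{e}_{(a,w,\tau)}$ yields $\vb{e}_{\nu(a,w)}$ on the right but the $a$-th column of $P$ on the left, and these disagree since $\nu(a,w)$ depends on $w$. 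The paper's stochastic-bridge construction has the analogous tension with that relation, so this is arguably a definitional subtlety in how ``lift'' is used here rather than a flaw specific to your argument --- but you should flag and resolve it rather than cite it away.
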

\begin{obs}
    The lifted chain's marginal mixes to $\pi$ in $D(G)$ timesteps, to arbitrary precision $\epsilon$, a remarkable fact.
\end{obs}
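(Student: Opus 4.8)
The plan is to reconstruct the explicit routing lift of \cite{AST17-l}: a gadget on an enlarged state space that deterministically pushes all probability mass onto the target $\pi$ within $D(G)$ steps and then holds it there. Since $\epsilon>0$ is arbitrary, what the bound $\dlift{\mathcal{M}}_{\epsilon}\le D(G)$ really asks for is that the marginal of the lifted chain equal $\pi$ \emph{exactly} for every $t\ge D(G)$, and that is what the construction should deliver.

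First I would fix, once and for all, a canonical shortest path $\gamma_{ij}=(i=w^{ij}_0,w^{ij}_1,\dots,w^{ij}_{d(i,j)}=j)$ for every ordered pair $i,j\in V(G)$; connectedness of $G$ makes these exist and the definition of the diameter forces each to have length $d(i,j)\le D(G)$. The lifted graph $\dlift{G}$ has as vertices the triples $(t,i,j)$, where $i,j\in V(G)$ and $t$ is a clock taking $D(G)$ consecutive integer values, so $\abs{V(\dlift{G})}=D(G)\cdot n^2$; the homomorphism $d$ sends $(t,i,j)$ to the $\min\{t,d(i,j)\}$-th vertex of $\gamma_{ij}$, so $(0,i,j)$ projects to $i$ and $(t,i,j)$ projects to $j$ once $t\ge d(i,j)$. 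The transition matrix $\dlift{P}$ acts by advancing the clock deterministically, $(t,i,j)\mapsto(t+1,i,j)$, until the walker has reached $j$, whereupon it idles at $j$ forever. The initialisation map sends the coarse start at $x$ to the lifted distribution $\dlift{p}{}^{(0)}$ putting mass $\pi_j$ on $(0,x,j)$ for each $j$; since every $(0,x,j)$ projects to $x$ and $\sum_j\pi_j=1$, this projects back to the point mass at $x$, so the initialisation condition $p^{(0)}=C\dlift{p}{}^{(0)}$ holds.

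The key step is then a one-line marginal computation: after $t$ steps the mass $\pi_j$ launched at $(0,x,j)$ sits at a state of the $(\cdot,x,j)$-slice that projects to $j$ as soon as $t\ge d(x,j)$, so for every $t\ge D(G)$ and every start $x$ we get $C\,(\dlift{P})^{t}\,\dlift{p}{}^{(0)}=\sum_j\pi_j\vb{e}_j=\pi$ on the nose, whence $\norm{C(\dlift{P})^{t}\dlift{p}{}^{(0)}-\pi}_{TV}=0$ and $\dlift{\mathcal{M}}_{\epsilon}\le D(G)$ for every $\epsilon>0$. The remaining work is bookkeeping: checking that $\dlift{P}$ is stochastic with support exactly on $E(\dlift{G})$, that $(\dlift{P},\dlift{p}{}^{(0)})$ with the family of starts $\{\dlift{p}{}^{(0)}\}$ is a bona fide $d$-lift in the sense of \cite{AST17-l}, and reconciling the clean count $D(G)\cdot n^2$ with the clock indexing in the corner case $d(i,j)=D(G)$ (at worst one extra ``arrived'' state per vertex, which does not change the leading order).

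The step I expect to be the real obstacle is a padding issue. Routing along $\gamma_{ij}$ only uses $d(i,j)\le D(G)$ steps, so a walker that reaches $j$ early must wait there, and a ``wait'' move at $j$ projects onto the loop $(j,j)$, which need not be an arc of $G$. I would dispose of this by first passing to the lazy chain $\tfrac12(I+P)$: it has the same stationary distribution $\pi$ (still of full support) and, crucially, the \emph{same diameter}, since self-loops never shorten a path, so idling moves become available for free. A related subtlety is making the marginal equal $\pi$ at \emph{every} $t\ge D(G)$ rather than only at multiples of $D(G)$; in the construction above this is automatic because a walker is absorbed once it reaches its $\pi$-sampled destination, and we are entitled to use such an absorbing lift since a $d$-lifted chain need not itself be irreducible or even possess a stationary distribution. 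The one point where I would lean directly on \cite[Theorem 2]{AST17-l} is the verification that this gadget still meets the consistency requirements of a $d$-lift.
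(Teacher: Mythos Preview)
The paper does not prove this observation; it is a one-sentence remark following Proposition~\ref{prop:ATS_lift}, which is quoted from \cite[Theorem~2]{AST17-l}. What the paper does describe, in the text after the observation and in Section~\ref{sec:d_lift_complexity}, is the construction from \cite{AST17-l}, and it differs from yours: the lifted state is $(t,v_0,v)$ with projection $d\colon(t,v_0,v)\mapsto v$ (the third coordinate is the \emph{current} position, not the destination), and the transitions in the $v$-coordinate are given by a \emph{stochastic bridge}---a time-inhomogeneous sequence $P(1),\dots,P(D(G))$ of stochastic matrices respecting $E(G)$ satisfying $P(D(G))\cdots P(1)\,\vb{e}_{v_0}=\pi$, computed by solving max-flow problems. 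After step $D(G)$ the dynamics revert to the coarse-grained $P$, which preserves $\pi$.

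Your construction is correct but takes a genuinely different and more elementary route: you encode the \emph{destination} $j$ in the lifted state, sample it with probability $\pi_j$ at initialisation, and then route deterministically along a fixed shortest path; the lifted transition matrix becomes a permutation and no max-flow is needed. What you gain is a fully self-contained argument. What you lose is the exact vertex count: the paper's bridge fits into exactly $D(G)\cdot n^2$ states because the randomness lives in the transitions rather than in a destination label, whereas your clock needs one extra ``arrived'' value when $d(i,j)=D(G)$, as you note. The self-loop padding issue you flag is real in both constructions (the paper's figure caption says ``self-loops in the drawing are omitted for clarity'', and the bridge schedule of Section~\ref{sec:d_lift_complexity} also parks mass at leaves of a BFS tree), so your lazy-chain fix is the standard and appropriate move.
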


In their statement of this theorem in~\cite{AST17-l}, the authors stipulate that this lift has certain restrictive properties.
The first is that the starting distribution for the lift is initialised according to a particular mapping $d^{\text{init}}: p^{(0)} \mapsto \dlift{p}{}^{(0)}$ i.e.\ $\mathcal{P}^{d} = \qty{d^{\text{init}} p^{(0)} \relmiddle| p^{(0)} \in \mathcal{P} }$ in the definition of mixing of the marginal~\eqref{eq:marginal_mixing_time}.
The proposition does not contradict the conductance lower bound given in~\cite{Chen1999} as discussed earlier, because this is defined for $\mathcal{P}^c$ being the set of \emph{all} probability distributions over the lifted vertices (or equivalently, distributions with all probability concentrated at basis states).
Indeed, our definition of a lifted chain allows us this choice, since $\dlift{p}{}^{(0)}$ satisfies $p^{(0)} = d  \dlift{p}{}^{(0)}$ by construction, where $d $ is the linear map induced by the lift homomorphism..
The second restriction is that the lifted chain having a marginal that has mixed does not necessarily imply that the lifted chain itself has mixed.
Since we will only care about the marginal mixing to $\pi$, this is not important for us, in fact the Diaconis lift on the cycle discussed in Section~\ref{sec:cycle_example} has this property.
We must also allow for the lifted chain to be reducible, that is, $\dlift{G}$ is not a connected graph.
Again, this does not concern us. 

For completeness, we shall briefly describe the $d$-lifting, without going into exhaustive detail.
The interested reader is referred to~\cite{AST17-l}.
The lifting rests on the following construction from~\cite{AST17-l}: let $G$ be a graph on $n$ vertices and let $p, p'$ be probability distributions on $V(G)$.
Then, there is a set of $D(G)$ transition matrices on $G$, $\{ P(i) \}^{D(G)}_{i=1}$, called a \emph{stochastic bridge} such that 
$p' = P(D(G)) P(D(G)-1) \cdots P(2) P(1) p$.

To apply the $d$-lifting, for each vertex of $G$ we create a copy of the graph $G \otimes P_T$, where $T= D(G)$, then take their disjoint union, giving the graph $\dlift{G} := \biguplus_{v_0 \in V(G)} G \otimes P_T$.
The vertex set $V(\dlift{G}) = \{ (t, v_0, v)\mid t \in \{0,1,\ldots, D(G) -1\},\ v_0,v \in V(G) \}$ and $d : (t,v_0,v) \mapsto v$.
We provide an example diagram of the lift in Figure~\ref{fig:d_delta_lift}.

\begin{figure}[hbtp]
        \centering
        \includegraphics[width=\textwidth]{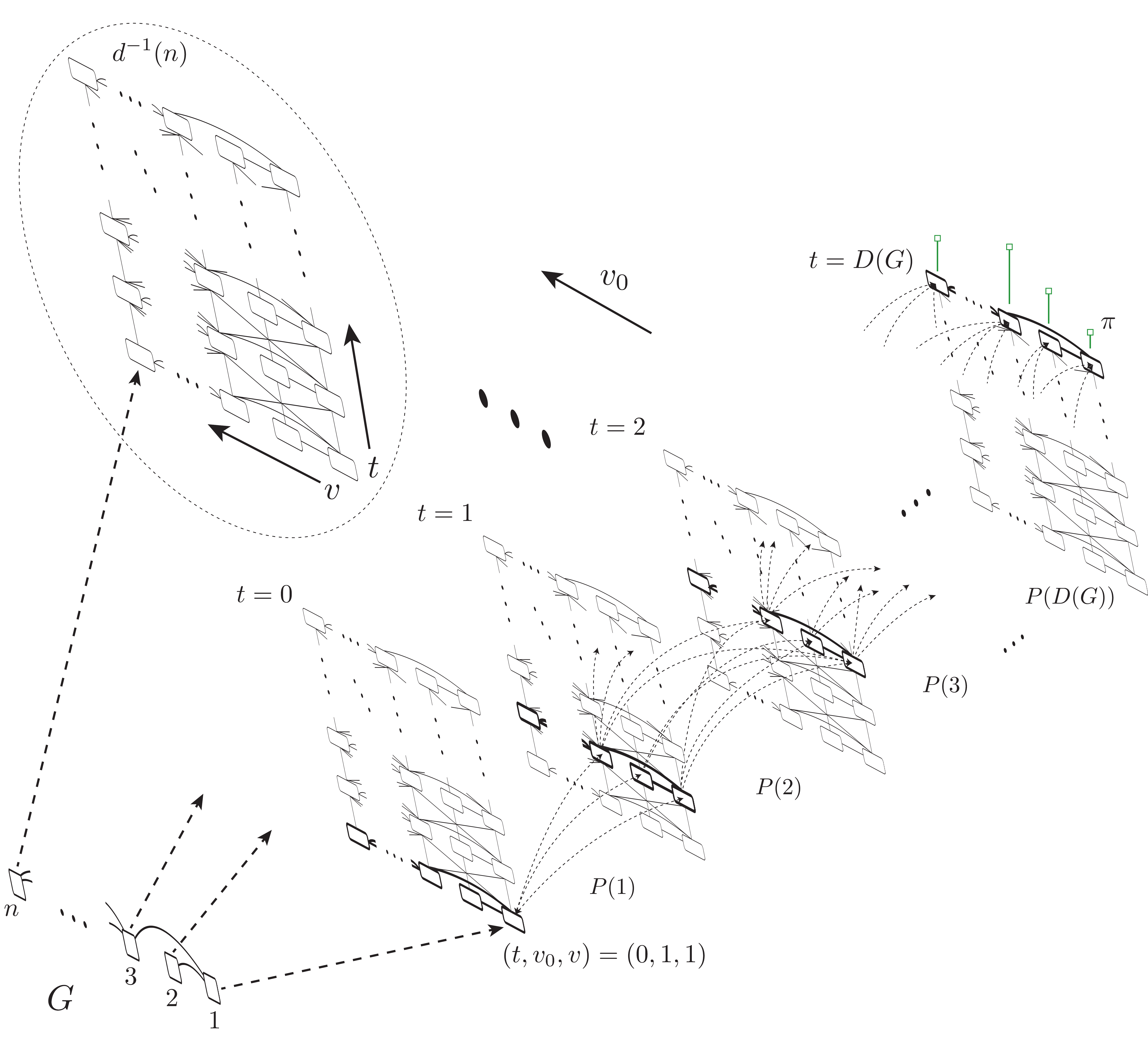}
        \caption{Illustration of the $d$-lifting.
        The figure is the same as Figure 4 in~\cite{AST17-l}, with some more detail.
        For each vertex in $V(G)$ there is a single disjoint copy of the graph $G \otimes P_{D(G)}$.
        At the top left we show $d^{-1}(n)$ for $n \in V(G)$.
        On the bottom right we illustrate the time evolution of a walk starting at vertex $1 \in V(G)$, whose corresponding starting state in the $d$-lifted walk is $(t,v_0,v) = (0,1,1)$.
        The evolution, defined by a stochastic bridge $\{ P(i) \}^{D(G)}_{i=1}$ is depicted with multiple copies of $G \otimes P_{D(G)}$, one for each $t$, with the corresponding $v$ vertices boldened.
        We see that the final distribution is $\pi$ over the appropriate vertices, indicated by the green lines capped with boxes.
        After $t \geq D(G)$ timesteps, the marginal of the lifted chain has mixed to $\pi$ and the dynamics proceed according the coarse-grained transition matrix $P$.
        We also indicate with arrows labelled $t$, $v_0$ and $v$ the direction in which the lifted vertex indices $(t,v_0,v)$ increase.
        Self-loops in the drawing are omitted for clarity.
        }
        \label{fig:d_delta_lift} 
    \end{figure}

Roughly speaking the lifted walk starts by sampling a vertex, $X(0)$, from $G$ according to the initial distribution $p^{(0)}$.
Then, we walk on the $X(0)$\supth copy of $G \otimes P_{D(G)}$, starting at the node $(0,  X(0), X(0))$.
The transition probabilities are engineered using the stochastic bridge such that $t$ increases by one at each timestep and $P(t)$ is applied to the $v$ space at timestep $t$.
This ensures that the final distribution in the $v$ space is $\pi$, the stationary distribution of the chain $M_G$, by taking $p' = \pi$ and $p = \vb{e}_{X(0)}$ in Eq.~\eqref{eq:d_g_bridge}.
Marginalising gives us \emph{exactly} the marginal distribution $\pi$ after $D(G)$ timesteps.
In practise, the stochastic bridge will be attained to some arbitrary precision $\delta$, so we have that the marginal mixes to $\pi$ for arbitrary $\delta$.

\subsection{Complexity of computing the \boldmath{$d $}-lifting}\label{sec:d_lift_complexity}

Let us revisit the following claim, made in \cite{AST17-l} and proved in the newer paper~\cite{AST17-q}.

\begin{claim}[Stochastic Bridge]\label{cla:stochastic_bridge}
    Let $G$ be a connected graph on $n$ vertices and let $p, p'$ be probability distributions on $V(G)$.
    Then, there is a set of $D(G)$ transition matrices on $G$, $\{ P(i) \}^{D(G)}_{i=1}$, called a \emph{stochastic bridge} such that 
    \begin{equation}\label{eq:d_g_bridge}
        p' = P(D(G)) P(D(G)-1) \cdots P(2) P(1) p,
    \end{equation} 
    where $D(G)$ is the diameter of $G$.   
\end{claim}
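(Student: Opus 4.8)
The plan is to build the stochastic bridge explicitly in $D(G)$ stages, using the connectedness of $G$ to move probability mass from $p$ to a convenient intermediate distribution with full support, and then from that distribution to $p'$. Concretely, I would first prove an auxiliary lemma: if $q$ is any probability distribution on $V(G)$ and $\pi_0$ is a fixed reference distribution with strictly positive entries (for instance the uniform distribution), then for any $k \ge D(G)$ there is a product of $k$ valid transition matrices on $G$ sending $q$ to $\pi_0$. The idea is that in $D(G)$ steps a walker can be routed from any starting vertex to any target vertex along a shortest path, so one can design transition matrices $P(1),\dots,P(D(G))$ that each respect the locality constraints of $G$ (support only on arcs, columns summing to one) and that collectively funnel all mass onto, say, a single vertex $v^\star$; then if fewer than $D(G)$ steps were used one pads with self-loops (allowed since we can always include $P_{ii}>0$, or if $G$ has no self-loops one uses an even-length detour — but note the problem statement does not insist $P(i)$ be the same matrix, so padding with the identity is the cleanest route, and one should check whether the ambient definition permits $P_{ii}>0$, which it does since $(i,i)$ being absent from $E(G)$ is the only obstruction and we may simply assume it is present or absorb the parity issue into the path length).

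The core construction then goes as follows. Fix a spanning structure: for each ordered pair $(u,v)$ choose a shortest path $u = x_0^{u,v}, x_1^{u,v}, \dots, x_{d(u,v)}^{u,v} = v$. To push the distribution $p$ toward a single sink vertex $r$, at step $t$ (for $t=1,\dots,D(G)$) define $P(t)$ so that any mass currently sitting at a vertex $w$ that still needs to travel is advanced one step along its designated shortest path to $r$, while mass that has already arrived at $r$ stays put via a self-loop. Because every vertex is within distance $D(G)$ of $r$, after $D(G)$ applications all the mass is concentrated at $r$, i.e. $P(D(G))\cdots P(1)\, p = \vb{e}_r$. Symmetrically — or rather, by running the analogous construction in reverse — one obtains $D(G)$ transition matrices $P'(1),\dots,P'(D(G))$ with $P'(D(G))\cdots P'(1)\, \vb{e}_r = p'$; here one spreads the point mass at $r$ out to $p'$ by, at each step, splitting mass from $r$ along shortest paths toward the eventual support of $p'$ in the correct proportions. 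The only subtlety is that we are allowed a total of exactly $D(G)$ matrices in Eq.~\eqref{eq:d_g_bridge}, not $2D(G)$: so instead of concatenating a "contract" phase and an "expand" phase, I would interleave them, or better, observe that one does not actually need the intermediate point mass. A cleaner approach that uses exactly $D(G)$ steps: pick $\pi$ to be any full-support distribution, and at step $t$ set $P(t)$ to interpolate — using the path system — between "transporting $p$ forward" and "having $p'$ ready to be hit", so that after $D(G)$ steps the composite map sends $p$ to $p'$. Realizing this cleanly is where the real work lies.

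The main obstacle I expect is precisely this bookkeeping: ensuring the bridge has length exactly $D(G)$ (not more, not fewer) while keeping each $P(t)$ genuinely stochastic and locally constrained, and while guaranteeing the composite equals $p'$ exactly rather than approximately. The slick way around it is a dimension/convexity argument: the set of distributions reachable from $p$ in $D(G)$ locality-respecting steps is a convex polytope $R_p$; I would show $R_p$ has full support reachability — every vertex can receive positive mass — and in fact that $R_p$ is the entire simplex, because one can independently route arbitrarily small amounts of mass from $p$'s support to every vertex along shortest paths, and the residual mass can be parked anywhere reachable; since $p'$ lies in the simplex, $p' \in R_p$, which is exactly the claim. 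I would then unwind this existence statement into the explicit matrices if the paper needs them constructively (it does, for the later complexity discussion), by noting that the routing scheme above is computable in $\poly(n)$ time: it requires only a BFS from each vertex to get shortest-path trees, i.e. $O(n^3)$ work, and the entries of each $P(t)$ are then simple rational combinations of the entries of $p$ and $p'$. I would close by remarking that this also establishes the bound is tight — $D(G)$ steps are necessary since a locality-respecting chain cannot move mass between vertices at distance $D(G)$ in fewer steps — which is consistent with the diameter appearing in Proposition~\ref{prop:ATS_lift}.
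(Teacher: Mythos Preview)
Your final approach --- showing that the reachable set $R_p$ after $D(G)$ locality-respecting steps is convex and equals the full probability simplex --- is correct and is organised differently from the paper. The paper does not prove the claim directly: it cites \cite[Lemma~5]{AST17-q}, where each $P(t)$ is obtained as the solution of a max-flow problem on a bipartite gadget with prescribed boundary capacities $p_t$ and $p_{t+1}$, and then supplies the missing ingredient, a concrete ``schedule'' $p_0,\ldots,p_{D(G)}$ of intermediate distributions built from a BFS spanning tree $T_i'$ rooted at the source vertex. So the paper's route is: pick intermediate targets via BFS trees, then solve $D(G)$ separate max-flows. Your route is: argue globally that every $p'$ is reachable in $D(G)$ steps by convexity plus reachability of the extreme points $\vb{e}_v$. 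Your argument is cleaner for pure existence; the paper's is immediately constructive and feeds directly into the $O(n^4 D(G))$ complexity bound of Lemma~\ref{lem:d_lift_runtime}.

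There is one genuine gap you should close. You assert that $R_p$ is a convex polytope, but this is not automatic: the set of products $P(D(G))\cdots P(1)$ of stochastic matrices is \emph{not} convex in general, so the image of $p$ under it need not obviously be convex either. The right way to see it is to pass to the time-expanded graph on $V(G)\times\{0,\ldots,D(G)\}$ and observe that a sequence $(P(1),\ldots,P(D(G)))$ applied to $p$ is exactly a unit flow with source profile $p$ at time $0$; then $R_p$ is the projection of the (linear) flow polytope onto the time-$D(G)$ marginals, hence a polytope. This same identification is what makes your ``route arbitrarily small amounts along shortest paths'' sketch rigorous --- it is a flow decomposition, and it also shows cleanly that each $\vb{e}_v\in R_p$ (route every unit of mass one step closer to $v$ at each time, then hold), which together with convexity gives $R_p=\Delta_{n-1}$. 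Note this last step uses $P_{ii}>0$, which you correctly flag; the paper tacitly assumes self-loops (cf.\ the caption of Figure~\ref{fig:d_delta_lift}), and without them the claim fails already on $P_2$. Finally, you should drop the contract-then-expand and ``interleaving'' detours --- they do not lead to a $D(G)$-step bridge, as you yourself note --- and go straight to the convexity argument; and be aware that to recover the complexity statement in Lemma~\ref{lem:d_lift_runtime} you will still need an explicit schedule, which is where your BFS-tree idea and the paper's $T_i'$ construction essentially coincide.
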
 

Claim~\ref{cla:stochastic_bridge} is proved in~\cite[Lemma 5]{AST17-q}, taking inspiration from Aaronson~\cite{Aaronson2005}. Their proof involves showing that the transition probability matrix $P(t)$ between `times' $t$ and $t+1$ are given by the solution of a maximum flow problem.
One then solves this maximum flow problem for each $t \in (0,1,\ldots, D(G)-1)$ to obtain the transition probabilities $P(1), P(2), \ldots P(D(G))$, see Figure~\ref{fig:atslemma5}. 

This proof is lacking one ingredient to be completely constructive, a `schedule' of flows for the edges adjacent to the source and sink vertices at a given pair of times $(t,t+1)$. To use the notation of~\cite[Lemma 5]{AST17-q}, we need to set a schedule of $p_t$, i.e. $\{ p_0, p_1, \ldots, p_{D(G)} \}$ 
for a given vertex $i$.

\begin{figure}[hbtp!]
    \centering
    \includegraphics[width=0.85\textwidth]{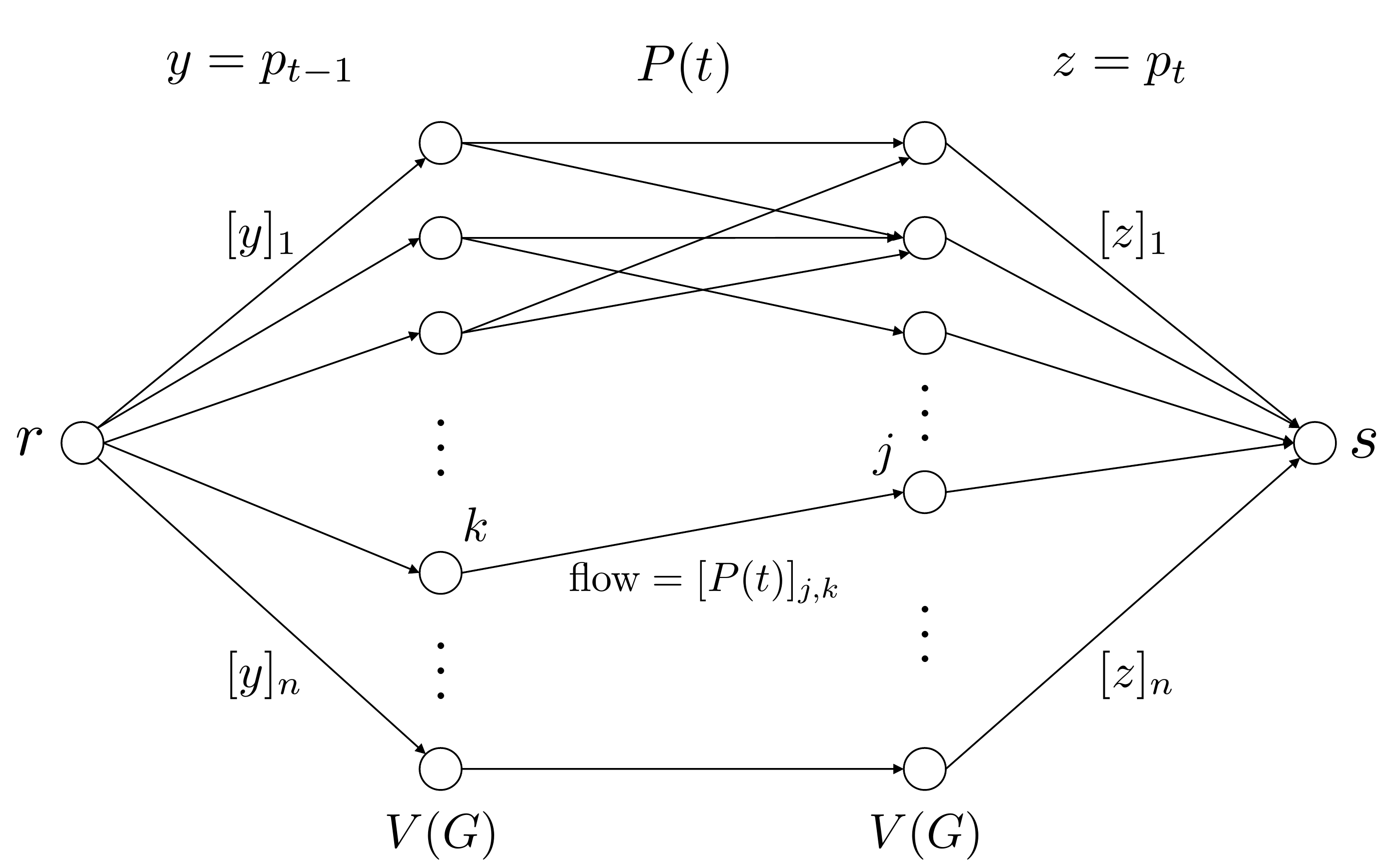}
    \caption{Illustration of the construction used for proving the existence of the stochastic bridge in~\cite[Lemma 5]{AST17-q}, see their paper full details.
    We wish to solve the $(r,s)$-flow shown in the figure, yielding $P(t)$ for each $t$ in the stochastic bridge, that is, we need to solve a maximum flow for each of the $D(G)$ transition matrices in the stochastic bridge.
    The solution flows on the arcs in between the two copies of $V(G)$ give the transition matrix $P(t)$. We set the capacities $y$ and $z$, the capacities through the middle arcs are all $1$.
    }
    \label{fig:atslemma5}
\end{figure}

For a given vertex $i$, we can find the stochastic bridge taking $p_0 = e_i$ to $p_{D(G)} = \pi$ as follows: find a spanning tree $T_i$ of $G$ rooted at vertex $i$, using breadth-first search. We shall now modify $T_i$ using the following procedure: walk through $T_i$ using breadth-first search. Every time you reach any vertex $j$ that has already been visited, append a new child vertex also labelled $j$. We call this modified graph $T'_{i}$ ; it is still a tree. Moreover, we denote the $t^{\text{th}}$ \emph{level} of $T_i'$, $\ell_t(T_i')$, the set of vertices (in $V(G)$) at distance $t$ from the root $i$ in $T_i'$.

As an example, consider the graph $G$ in Figure~\ref{fig:spanning_trees}, with the spanning tree $T_1$ (rooted at vertex 1) and the related tree $T'_{1}$.
Also, let $\mathcal{D}(j, T)$ be the set of descendent leaves of the vertex $j$ in a tree $T$. 
We then set our `bridge schedule' as follows:
\begin{equation}\label{eq:sched_prob}
[p_t]_j =
\begin{cases} 
\sum_{k \in \mathcal{D}(j, T_i')}[\pi]_{k}, &j \in \ell_t(T_i');\\
0, &  \text{otherwise}.
\end{cases}
\end{equation}

\begin{figure}[hbtp!]
    \centering
    \includegraphics[width=0.75\textwidth]{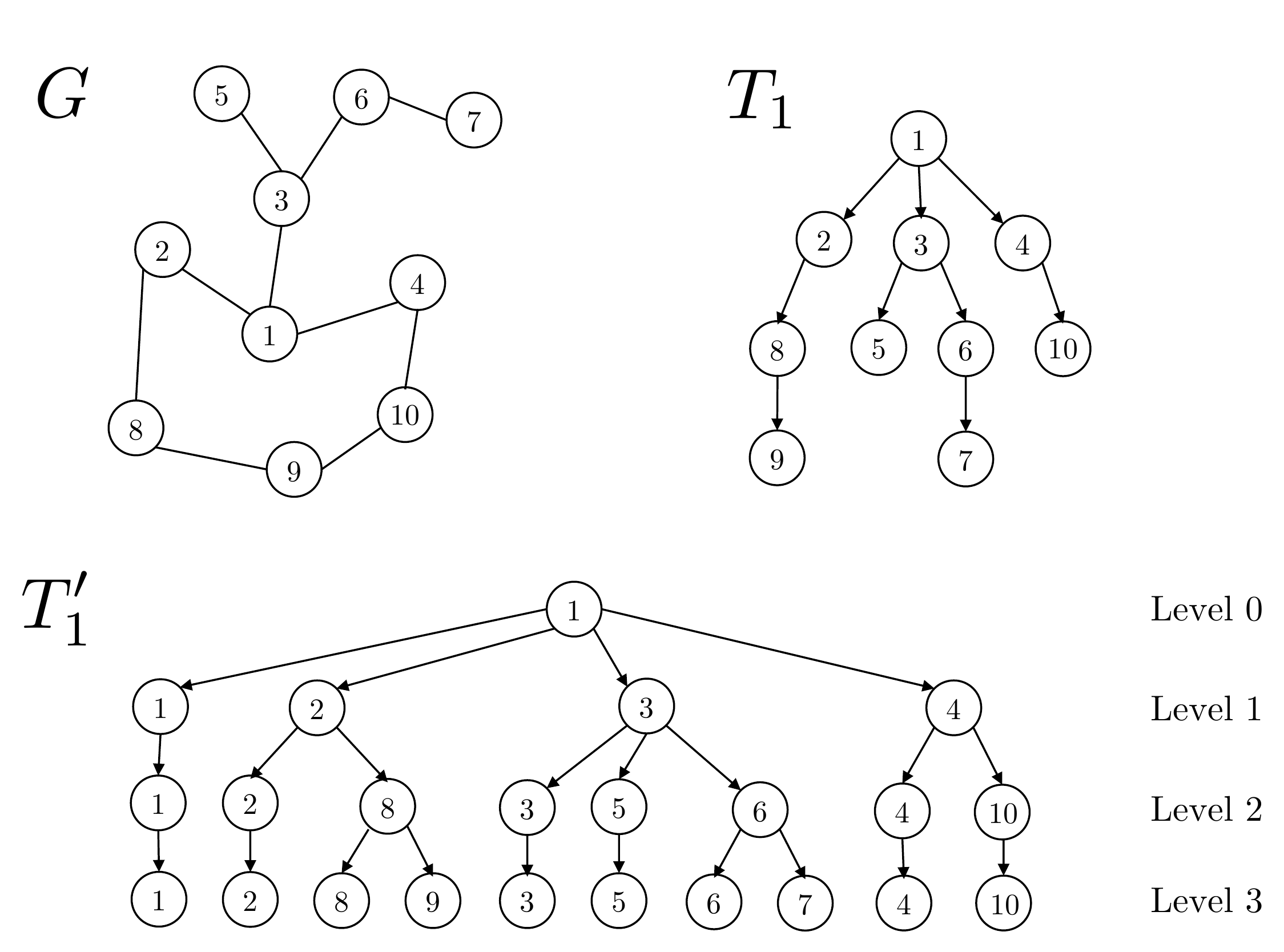}
    \caption{Illustration of the construction of the graph $T'_i$ for a graph $G$.
    First, find a spanning tree $T_i$ of $G$ rooted at vertex $i$, using breadth-first search. Then, modify $T_i$ using the following procedure: walk through $T_i$ using breadth-first search. Every time you reach any vertex $j$ that has already been visited, append a new child vertex also labelled $j$.
    The resulting graph is $T'_i$.}
    \label{fig:spanning_trees}
\end{figure}

This schedule effectively routes probability mass through the lifted graph at each timestep. We then solve each max-flow problem with (\cite[Lemma 5]{AST17-q} notation) $y = p_t$, $z = p_{t+1}$ for each $t \in \{0,1,\ldots, D(G)-1\}$ to get the transition probabilities.
Notice here, it is possible to `prune' the vertices in the $d$-lifted state space that do not occur at the $t^{\text{th}}$ level of each $T_i'$, that is, vertices for which $[p_t]_j = 0$. To avoid complications we shall not take this into account in the analysis that follows.

Having assigned the `schedule probabilities' for the stochastic bridge construction, we can prove the following.
\begin{lem}\label{lem:d_lift_runtime}
    Let $M_G=(P,p^{(0)})$ be a Markov chain on a connected graph $G$ on $n$ vertices.
    Moreover, $P$ has stationary distribution $\pi$ with all strictly positive elements.
    Then, computing the transition probabilities of the $d$-lifted Markov chain, $M^{d}_G$ requires $O(n^4 D(G))$ time, where $D(G)$ is the diameter of $G$. 
\end{lem}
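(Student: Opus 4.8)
The plan is to walk through the construction of $M^d_G$ sketched above and bound each stage separately, the point being that the maximum-flow computations dominate. I would split the work into four stages: (i) precomputing the stationary distribution $\pi$ (if it is not supplied) and the diameter $D(G)$; (ii) for each source vertex $i\in V(G)$, building the BFS tree $T_i$, its modification $T_i'$, the descendant-leaf $\pi$-masses, and hence the bridge schedule $\{p_0,\dots\}$ of Eq.~\eqref{eq:sched_prob}; (iii) for each source vertex $i$ and each level $t$, solving the maximum-flow instance of~\cite[Lemma 5]{AST17-q} (Figure~\ref{fig:atslemma5}) with boundary capacities $y=p_t$, $z=p_{t+1}$ and unit internal capacities, and reading the stochastic-bridge matrix $P(t+1)$ for $i$ off the optimal flow; (iv) assembling the (sparse) transition data of $P^d$ from these bridges and the fixed lift bookkeeping of~\cite{AST17-l}. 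Correctness --- that this schedule really yields the stochastic bridges and hence the lift of Proposition~\ref{prop:ATS_lift} --- is inherited from~\cite{AST17-l,AST17-q} and is not re-proved; only the running time is at issue, and I count arithmetic operations on a unit-cost RAM, using a strongly polynomial max-flow routine so that bit-lengths never enter the bound.

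For stages (i), (ii) and (iv) I would just check that each is $O(n^3 D(G))$ or below. Computing $\pi$ is Gaussian elimination on $(P-I)\pi=0$, $\transpose{\vb*{1}}\pi=1$, so $O(n^3)$; $D(G)$ is the largest eccentricity, obtained from BFS out of every vertex in $O(n(n+|E(G)|))=O(n^3)$ time (and in fact falls out for free from stage (ii)). In stage (ii), BFS gives $T_i$ in $O(n+|E(G)|)=O(n^2)$, and the modification appends at most one new leaf per re-examined edge, so $|V(T_i')|\le n+|E(G)|=O(n^2)$ and $T_i'$ has depth $\le D(G)+1$; one post-order pass computes $\sum_{k\in\mathcal D(j,T_i')}[\pi]_k$ for every $j$ in $O(|V(T_i')|)$ additions, and reading off Eq.~\eqref{eq:sched_prob} is $O(|V(T_i')|)$ more. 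Summed over the $n$ sources this is $O(n^3)$. For stage (iv): each bridge matrix $P(t)$ has support contained in the arcs of $G$ together with self-loops, hence $O(n^2)$ nonzero entries, and over all sources and levels this is $O(n^3 D(G))$ entries, which together with the precomputable lift structure specify the $O(n)$-out-degree matrix $P^d$; writing them down is $O(n^3 D(G))$. All three stages are dominated.

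Stage (iii) is where the bound is set, and it is the one place needing care. There are $n\cdot O(D(G))$ max-flow instances --- one per source vertex per level --- each on a network with $2n+2=O(n)$ nodes and $n+n+|E(G)|=O(n^2)$ arcs. Invoking a strongly polynomial maximum-flow algorithm whose running time on a $V$-node network is $O(V^3)$ (for instance the Malhotra-Kumar-Maheshwari algorithm, or FIFO push-relabel) solves each instance in $O(n^3)$ time, for a total of $O(n^4 D(G))$; converting each optimal flow to conditional probabilities $[P(t+1)]_{ji}=(\text{flow on } i\!\to\!j)/[p_t]_i$, with an arbitrary self-loop wherever $[p_t]_i=0$, costs only $O(n^2)$ per instance and is absorbed. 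Adding the four stages gives $O(n^3)+O(n^3D(G))+O(n^4D(G))=O(n^4D(G))$, as claimed. The main obstacle is making both of the following airtight: that $|V(T_i')|=O(n^2)$, so that everything built around the trees stays lower order; and that one max-flow on this particular $\Theta(n)$-node, $O(n^2)$-arc network costs only $O(n^3)$ --- cubic per flow is exactly what the target needs, since the bound factors as (number of sources $n$) $\times$ (bridge length $O(D(G))$) $\times$ (one max-flow, $O(n^3)$), so a weaker per-flow estimate such as Dinic's $O(V^2E)=O(n^4)$ would already overshoot.
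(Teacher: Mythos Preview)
Your proposal is correct and follows essentially the same approach as the paper: count $n\cdot D(G)$ max-flow instances on $O(n)$-vertex networks, invoke an $O(V^3)$ strongly polynomial max-flow algorithm (the paper cites Malhotra--Kumar--Maheshwari as you do), and observe that all schedule/tree/assembly bookkeeping is lower order. Your treatment is in fact more careful than the paper's on the size of $T_i'$ and on why an $O(V^3)$ rather than $O(V^2E)$ flow routine is needed, but the skeleton of the argument is the same.
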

\begin{proof}
    For the $d$-lifting of a Markov chain on an $n$ vertex graph, we are required to compute a stochastic bridge corresponding to each vertex, with $p' = \pi$ for every bridge and $p = \vb{e}_i$ for the $i$\supth vertex. 
    We require $n$ stochastic bridges, each containing $D(G)$ $n\times n$ transition matrices.
    
    Solving for each transition matrix requires solving a max-flow problem on $2n + 2$ vertices, where certain flows are given by the `schedule' probabilities~\eqref{eq:sched_prob}.
    Computing the schedule probabilities $[p_t]_j$ for a given vertex $i$ involves finding a spanning tree $T_i$, walking through $T_i$, appending vertices, then for each $t\in [D(G) -1]$ summing up the values of the children. The complexity of this task is $O(n D(G))$.
    Taken over all $n$ stochastic bridges, computing the schedule probabilities takes time $O(n^2 D(G))$.
     
    Maximum-flow problems can be solved in time $O(|V(G)|^3)$ (see Malhotra, Pramodh Kumar, Maheshwari~\cite{Malhotra1978}), so the total runtime complexity for solving the max flow problems is $O(n^4 D(G))$ as we solve $nD(G)$ max-flow problems on graphs with $2n + 2$ vertices.
    This is the overall runtime complexity of computing the transition probabilities for the $d$-lifting on a graph $G$ with $n$ vertices, as solving the max-flow problems washes out the complexity of computing the schedule probabilities.
\end{proof}

\subsection{Computing the quantum average mixing distribution}\label{sec:pi_q_complexity}

First, we quote a useful identity from~\cite[Theorem 6.1]{Godsil2017} concerning the elements of the quantum average mixing distribution of a quantum walk on a $m$-regular graph $G$ on $n$ vertices:

\begin{equation}\label{eq:pi_q_av_mix_mat}
    [\pi^q_{\psi(0)}]_v = \sum_r \langle \psi(0) | F_r D_v F_r |\psi(0)\rangle ,
\end{equation}
where $D_v\in \mathbb{R}^{mn \times mn}$ is the diagonal matrix with a $1$ in positions corresponding to vertex $v$ and zeros elsewhere. and $F_r$ are the idempotents of the spectral decomposition of $U$. 
Thus, knowing the spectral decomposition of $U$ allows us to compute the quantum average mixing distribution of the walk, $\pi^q_{\psi(0)}$.

Let us now consider the computational complexity of computing the quantum average mixing distribution, $\pi^q_{\psi(0)}$, using Eq.~\eqref{eq:pi_q_av_mix_mat}.
Computing the spectral decomposition of the average mixing matrix takes time $O((nm)^3)$. Each term $F_r^\dagger D_k F_r$ is $O((nm)^2 \cdot m)$ since $D_k$ has only $m$ non-zero terms. Taking $\langle \psi(0) | \cdots |\psi(0)\rangle$ is an additive $O(m)$ factor. We then have that $r$ can range up to $nm$ and we perform the sum for $v \in [n]$ for a total $O((nm)^4)$. Now, taking $m= O(n)$ we have that computing $\pi^q(\psi(0))$ requires runtime $O(n^8)$.
This gives us the following lemma.

\begin{lem}\label{lem:pi_q_runtime}
    Let $(U, |\psi(0) \rangle )$ be a coined quantum walk on a $m$-regular graph $G$.
    Then, computing the quantum average mixing distribution, $\pi^q_{\psi(0)}$, takes time $O(n^4 m^4) = O(n^8)$.
\end{lem}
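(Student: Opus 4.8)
The plan is to establish Lemma~\ref{lem:pi_q_runtime} by a straightforward cost accounting of the formula in Eq.~\eqref{eq:pi_q_av_mix_mat}, which already reduces the problem to linear-algebraic primitives on the $mn$-dimensional walk space. First I would fix the ambient dimension $N := mn$ and recall that $U$ acts on $\mathbb{C}^N$, so its spectral decomposition $U = \sum_r \mathe^{\mathi\theta_r} F_r$ has at most $N$ distinct eigenvalues and hence at most $N$ idempotents $F_r$, each an $N\times N$ matrix. Diagonalising a general $N\times N$ matrix (equivalently, computing all eigenvalues and a full set of spectral projectors) costs $O(N^3)$ by standard dense linear algebra; this is the first line item. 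I would remark that $U$ being unitary does not change the asymptotic cost, though it guarantees the decomposition exists with orthogonal $F_r$.

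Next I would bound the cost of assembling a single summand $\langle\psi(0)|F_r D_v F_r|\psi(0)\rangle$. The key observation is that $D_v$ is diagonal with exactly $m$ nonzero (unit) entries — those indexed by the $m$ coin-labelled copies of vertex $v$. Hence $D_v F_r$ is obtained by zeroing out all but $m$ rows of $F_r$, at cost $O(Nm)$, and more to the point the product $F_r D_v F_r$ only needs the $m$ relevant columns of the left factor paired against the $m$ relevant rows of the right factor: writing $G_r^{(v)}$ for the $N\times m$ submatrix of $F_r$ on the columns indexed by $v$, we have $F_r D_v F_r = G_r^{(v)} (G_r^{(v)})^{\mathsf T}$-style contraction, and $\langle\psi(0)|F_r D_v F_r|\psi(0)\rangle = \sum_{k} |\, (G_r^{(v)})^{\mathsf T} \text{-row}_k \cdot \psi(0)\,|^2$ reduces to $m$ inner products of length-$N$ vectors, i.e.\ $O(Nm)$ arithmetic per $(r,v)$ pair. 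Summing over the $O(N)$ values of $r$ and the $n$ vertices $v$ gives $O(N \cdot n \cdot Nm) = O(N^2 n m)$; substituting $N = nm$ yields $O(n^4 m^3)$ for the summation phase, which together with the $O(N^3) = O(n^3m^3)$ diagonalisation is dominated by $O(n^4 m^4)$ once one is generous with the bookkeeping. Finally, specialising to $m = O(n)$ collapses this to $O(n^8)$, completing the proof.

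The main obstacle here is not any genuine mathematical difficulty but rather pinning down a defensible model of computation and arithmetic: to claim $O(N^3)$ for the spectral decomposition one is implicitly working over exact arithmetic or else tolerating numerical error, and the eigenvalues of $U$ are generically irrational (indeed transcendental), so strictly speaking one computes $\pi^q_{\psi(0)}$ only to within a specified precision. I would handle this by stating up front that we work in the standard real-arithmetic / bit-complexity-up-to-polylog-factors model used throughout the surrounding literature (consistent with how the max-flow and eigenvalue costs are invoked elsewhere in the paper), so that the $O(\cdot)$ bounds refer to arithmetic operation counts; a remark could note that the precision overhead is polylogarithmic and does not affect the polynomial-in-$n$ conclusion that is the point of the lemma. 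The only other place to be slightly careful is ensuring the per-term bound really is $O(Nm)$ and not $O(N^2)$ — this is exactly where the sparsity of $D_v$ is used, and it is worth spelling out that without exploiting it one would get $O(N^3)$ per term and a worse (though still polynomial) final exponent, so the stated $O(n^8)$ genuinely relies on the $m$-sparsity of $D_v$.
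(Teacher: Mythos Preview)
Your proposal is correct and takes essentially the same approach as the paper: pay $O((nm)^3)$ for the spectral decomposition of $U$, exploit the $m$-sparsity of $D_v$ to bound each summand in Eq.~\eqref{eq:pi_q_av_mix_mat}, and sum over $r$ and $v$. Your per-term cost $O(Nm)$ is in fact sharper than the paper's $O(N^2 m)$ (you contract against $\ket{\psi(0)}$ directly rather than forming the full matrix $F_r D_v F_r$), and note the minor arithmetic slip $N^2 n m = n^3 m^3$ rather than $n^4 m^3$ --- both harmless, since the lemma only claims $O(n^4 m^4)$.
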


We will also need the following result concerning the quantum average mixing distribution.

\begin{lem}\label{lem:pi_q_nonzero}
    Let $(U, |\psi(0) \rangle )$ be a coined quantum walk on a connected $m$-regular graph $G$.
    Then, every element of the quantum average mixing distribution $\pi^q_{\psi(0)}$ is strictly positive.
\end{lem}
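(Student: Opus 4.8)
The plan is to reduce the statement to the claim that on a connected graph the quantum walk places nonzero amplitude on every vertex at some time, and then to establish that by a support‑propagation argument along the edges of $G$.

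First I would rewrite the identity~\eqref{eq:pi_q_av_mix_mat}. Since $U$ is unitary it is normal, so its spectral idempotents satisfy $F_r = F_r^\dagger = F_r^2$, and $D_v$ is a diagonal $0/1$ matrix, so $D_v = D_v^\dagger D_v$. Hence
\begin{equation}
    [\pi^q_{\psi(0)}]_v \;=\; \sum_r \langle \psi(0) | F_r D_v F_r |\psi(0)\rangle \;=\; \sum_r \norm{D_v F_r \ket{\psi(0)}}^2,
\end{equation}
a sum of nonnegative terms, so $[\pi^q_{\psi(0)}]_v > 0$ if and only if $D_v F_r \ket{\psi(0)} \neq 0$ for some $r$. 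The vectors $\{F_r\ket{\psi(0)}\}_r$ span the cyclic subspace $\mathcal{K} := \operatorname{span}\{ U^t \ket{\psi(0)} : t \in \mathbb{Z}_+ \}$ (expand $U^t = \sum_r \mathe^{\mathi \theta_r t} F_r$ and invert the Vandermonde relation in the distinct phases $\theta_r$), so the condition is equivalent to $D_v \mathcal{K} \neq \{0\}$, i.e.\ to $D_v U^t \ket{\psi(0)} \neq 0$ for some $t$, i.e.\ to $Q_t(v | \psi(0)) > 0$ for some $t$ by~\eqref{eq:node_prob}. Writing $S := \{ v \in V(G) : D_v \mathcal{K} \neq \{0\} \}$, the lemma is therefore equivalent to $S = V(G)$, and since $\ket{\psi(0)} \in \mathcal{K}$ is a unit vector we already know $S \neq \emptyset$.

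Next I would exploit the invariance of $\mathcal{K}$ together with the locality of the walk. Because $\mathcal{K}$ is $U$‑invariant and finite dimensional it is also invariant under $U^{-1} = U^\dagger$; moreover both $U = S(C\otimes I)$ and $U^{-1} = (C^\dagger \otimes I) S^{-1}$ respect the graph, so $D_u U D_x = D_u U^{-1} D_x = 0$ whenever $x \neq u$ and $x \not\sim u$. The key step is to show that $S$ is closed under adjacency: if $v \in S$ and $u \sim v$, then $u \in S$. Granting this, $S$ is a nonempty union of connected components of $G$, hence $S = V(G)$ by connectedness, finishing the proof.

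Proving the adjacency‑closure claim is the heart of the matter and the step I expect to be the main obstacle. The naive attempt — take $\ket{w}\in\mathcal{K}$ with $D_v\ket{w}\neq 0$, apply $U$, and note that $D_u U\ket{w}\in D_u\mathcal{K}$ while $D_u U\ket{w} = D_u U D_v\ket{w} + \sum_{x\in N(u)\setminus\{v\}} D_u U D_x\ket{w} + D_uUD_u\ket{w}$ — does not immediately work, because the contribution from $v$ could be cancelled by destructive interference at $u$. The plan to handle this is (i) to first verify that the single‑step block $D_u U D_v$ is nonzero for adjacent $u,v$, which is where a non‑degeneracy input on the coin enters (the coin matrix having no zero entries, as for the Hadamard coin $H_m$, suffices); and (ii) to argue with the whole subspace rather than one vector: if $D_u\mathcal{K} = \{0\}$ then $D_u(U\mathcal{K} + U^{-1}\mathcal{K}) = \{0\}$, so by locality $D_u U$ and $D_u U^{-1}$ annihilate $\sum_{x\sim u} D_x\mathcal{K}$, and pushing this relation back along a shortest $u$–$v$ path, while using unitarity and the nonvanishing of $D_u U D_v$, should force $D_v\mathcal{K} = \{0\}$, contradicting $v\in S$. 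Making the interference bookkeeping in (ii) clean — in particular choosing the right alternation of $U$‑ and $U^{-1}$‑steps so that no cancellation can occur simultaneously for all of $\mathcal{K}$ — is the part that will need the most care.
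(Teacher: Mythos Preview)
Your opening reduction is essentially the paper's. The paper likewise rewrites $[\pi^q_{\psi(0)}]_v$ as the nonnegative sum $\sum_r\sum_{k}\abs{\bra{\psi(0)}F_r\ket{k,v}}^2$, concludes that if it vanishes then $[F_r]_{(j,u),(k,v)}=0$ for all $r$ for some particular $(j,u)$ (implicitly treating $\ket{\psi(0)}$ as a basis state; your formulation via the cyclic subspace $\mathcal{K}$ is the cleaner way to handle a general initial vector), and hence that $\bra{j,u}U^t\ket{k,v}=0$ for every $t$. Up to this point the two arguments coincide.

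The divergence is precisely at the step you flag as the crux. The paper does not set up any adjacency-closure argument and does not discuss interference: it disposes of the matter in one sentence, asserting that $\bra{j,u}U^t\ket{k,v}=0$ for all $t$ forces the absence of a suitable path in $G$ and therefore disconnectedness. Your worry about destructive interference is legitimate, and the paper's proof does not engage with it. Your plan to rule it out via a non-degeneracy hypothesis on the coin (no zero entries, as for $H_m$) goes beyond what the paper actually argues, but also imports a hypothesis that is not in the lemma as stated. In short, your proposal follows the paper's route but is more scrupulous at exactly the point where the paper is terse; the price is that your argument is still incomplete, and finishing step~(ii) as you sketch it will need either that extra coin assumption or a different device to exclude cancellation.
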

\begin{proof}
    \makebox[0.915\linewidth][s]{Recall Eq.~\eqref{eq:pi_q_av_mix_mat}, that states $[\pi^q_{\psi(0)}]_v = \sum_r \langle \psi(0) | F_r D_v F_r |\psi(0)\rangle $, where $D_v = $}
    \newline$\sum_{k \in [m]} \ketbra{k, v}$ and $F_r$ are the idempotents of the spectral decomposition of $U$.
    Notice that 
    \begin{equation}
        \langle \psi(0) | F_r D_v F_r |\psi(0)\rangle = \sum_{k \in [m]} \bra{\psi(0)} F_r \ketbra{k, v} F_r \ket{\psi(0)} = \sum_{k \in [m]} \abs{\bra{\psi(0)} F_r \ket{k, v}}^2.
    \end{equation}
    Thus, if $[\pi^q_{\psi(0)}]_v = 0$, then $\sum_r \sum_{k \in [m]} \abs{\bra{\psi(0)} F_r \ket{k, v}}^2 = 0$ and there exist $u \in V(G)$, $j,k \in [d]$ such that $\sum_r \abs{\bra{j, u} F_r \ket{k, v}}^2 = 0$.
    This implies that $[F_r]_{(j, u), (k,v)} = 0$ for all $r$ and any linear combination of the $E_r$ has a $((j, u), (k,v))$-component of zero.
    In this case, for every $t \in \mathbb{N}$ we have $\bra{j,u} U^t \ket{k,v} = 0$.
    Now this is only true if $G$ is not connected, as it implies there is no path in $G$ of the form $(f_v(k), \ldots, w)$, where $w$ is the vertex satisfying $f_w(j) = u$.
    By contraposition we infer that $G$ being connected implies that $[\pi^q_{\psi(0)}]_v >0$ for all $v\in V(G)$. 
\end{proof}

This result should not be surprising since the limiting distribution of a classical ergodic Markov chain has strictly positive elements.

\subsection{Main Result}\label{sec:main_result}

We now have all of the pieces to prove the main result.

\begin{thm}\label{thm:main_result}
    Let $(U, |\psi \rangle )$ be a coined quantum walk on a connected $m$-regular graph $G$ on $n$ vertices.
    Then there exists a lifted Markov chain on $n^2 D(G)$ vertices with marginal that mixes exactly to the quantum average mixing distribution $\pi^q_{\psi(0)}$ after $D(G)$ timesteps, where $D(G)$ is the diameter of $G$. Computing the transition probabilities for the lifted Markov chain requires $O(n^4(m^4 + D(G)))$ time.
\end{thm}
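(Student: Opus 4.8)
The plan is to assemble Theorem~\ref{thm:main_result} by composing the three ingredients already established: the $d$-lifting of Proposition~\ref{prop:ATS_lift}, the runtime bound of Lemma~\ref{lem:d_lift_runtime}, and the facts about the quantum average mixing distribution from Lemmas~\ref{lem:pi_q_runtime} and~\ref{lem:pi_q_nonzero}. The key observation is that the $d$-lifting is parametrised by a \emph{target} stationary distribution $\pi$ with strictly positive entries; it does not matter that $\pi$ arises as the stationary distribution of some particular transition matrix $P$ — what the stochastic bridge construction actually needs is only that $\pi$ be a distribution with full support on $V(G)$, so that $p' = \pi$ is a valid endpoint for each bridge. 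So the first step is to take $\pi := \pi^q_{\psi(0)}$, the quantum average mixing distribution of the given walk $(U,\ket{\psi(0)})$.

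First I would invoke Lemma~\ref{lem:pi_q_nonzero} to conclude that, since $G$ is connected, every entry of $\pi^q_{\psi(0)}$ is strictly positive. This is precisely the hypothesis required to apply Proposition~\ref{prop:ATS_lift} (and the stochastic-bridge construction of Claim~\ref{cla:stochastic_bridge} with bridge schedule~\eqref{eq:sched_prob}), with the target distribution set to $\pi^q_{\psi(0)}$. Applying the proposition yields a $d$-lifted Markov chain $M^d_G$ on a graph $\dlift{G}$ with $n^2 D(G)$ vertices whose marginal satisfies $\dlift{\mathcal{M}}_\epsilon \le D(G)$ for arbitrary $\epsilon > 0$; since the construction drives the $v$-space distribution to $\pi^q_{\psi(0)}$ exactly after $D(G)$ steps (modulo the approximation parameter $\delta$ in the stochastic bridge, which can be taken arbitrarily small, or in the idealised construction set to zero), the marginal mixes \emph{exactly} to $\pi^q_{\psi(0)}$ at time $D(G)$. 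That establishes the existence and mixing claims.

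For the runtime claim I would proceed in two parts. By Lemma~\ref{lem:pi_q_runtime}, computing $\pi^q_{\psi(0)}$ from the spectral decomposition of $U$ via Eq.~\eqref{eq:pi_q_av_mix_mat} takes $O(n^4 m^4)$ time. Once $\pi^q_{\psi(0)}$ is in hand as the target distribution, Lemma~\ref{lem:d_lift_runtime} says that computing all the transition probabilities of the $d$-lifting — the $n$ stochastic bridges, each a sequence of $D(G)$ transition matrices obtained by solving max-flow problems on $2n+2$ vertices with the schedule~\eqref{eq:sched_prob} — takes $O(n^4 D(G))$ time. Adding the two contributions gives $O(n^4 m^4 + n^4 D(G)) = O(n^4(m^4 + D(G)))$, as claimed; with $m = O(n)$ and $D(G) = O(n)$ this is $O(n^8)$.

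The only real subtlety — and the point I would be careful to spell out — is the legitimacy of feeding $\pi^q_{\psi(0)}$ into the $d$-lifting in place of the stationary distribution of a classical chain. I would note explicitly that neither Proposition~\ref{prop:ATS_lift} nor the stochastic-bridge construction ever uses the relation $P\pi = \pi$; the coarse-grained transition matrix $P$ enters only as the dynamics \emph{after} time $D(G)$ (governing the tail behaviour of the lifted chain), and the mixing-of-the-marginal statement concerns only the first $D(G)$ steps, which are entirely determined by the bridges $\{P(i)\}_{i=1}^{D(G)}$ and the target $\pi$. Hence any choice of $P$ consistent with $G$ and with $P\pi^q_{\psi(0)} = \pi^q_{\psi(0)}$ (one exists because $\pi^q_{\psi(0)}$ has full support, e.g.\ a suitably weighted walk on $G$) completes the specification, but the marginal-mixing and runtime conclusions are insensitive to it. I expect this conceptual check to be the main obstacle to a clean write-up; the arithmetic combining the three lemmas is routine.
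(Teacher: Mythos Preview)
Your proposal is correct and follows essentially the same route as the paper: invoke Lemma~\ref{lem:pi_q_nonzero} to justify applying Proposition~\ref{prop:ATS_lift} with target $\pi^q_{\psi(0)}$, then add the runtimes from Lemmas~\ref{lem:pi_q_runtime} and~\ref{lem:d_lift_runtime}. Your explicit discussion of why the $d$-lifting does not actually require $P\pi = \pi$ for the first $D(G)$ steps is more careful than the paper's own proof, which simply takes $\pi^q_{\psi(0)}$ as the target without comment.
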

\begin{proof}
    We will use the $d$-lifting of the Markov chain on $G$ with $\pi^q_{\psi(0)}$ as the target distribution.
    From Lemma~\ref{lem:pi_q_nonzero}, $\pi^q_{\psi(0)}$ has strictly positive elements, and so satisfies the conditions for the $d$-lifting of Proposition~\ref{prop:ATS_lift}.
    From Lemmas~\ref{lem:d_lift_runtime}~and~\ref{lem:pi_q_runtime} we have that the runtime of computing $\pi^q_{\psi(0)}$ is $O(n^4 m^4)$ and the computing the $d$-lifting takes $O(n^4 D(G))$ time.
\end{proof}

Indeed, we can also generalise this result to a general quantum walk in the following way.

\begin{cor}\label{cor:main_result}
    Let $(U, |\psi \rangle )$ be a general quantum walk on a connected graph $G$.
    Then there exists a lifted Markov chain on $n^2 D(G)$ vertices with marginal that mixes exactly to the quantum average mixing distribution $\pi^q_{\psi(0)}$ after $D(G)$ timesteps, where $D(G)$ is the diameter of $G$. Computing the transition probabilities for the lifted Markov chain requires $O(n^8)$ time.
\end{cor}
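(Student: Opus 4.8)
The plan is to transcribe the proof of Theorem~\ref{thm:main_result} essentially word for word, checking at each step that the coined, $m$-regular hypotheses were never actually used. First I would note that the quantum average mixing distribution $\pi^q_{\psi(0)}$ is still well defined for a general quantum walk: writing $U=\sum_r\lambda_r F_r$ for its spectral idempotents (available for any unitary, with $|\lambda_r|=1$), the Ces\`{a}ro average $\frac{1}{T}\sum_{t<T}|\bra{k,v}U^t\ket{\psi(0)}|^2$ converges because the cross terms $\frac{1}{T}\sum_{t<T}(\lambda_r\bar\lambda_s)^t$ vanish for $r\neq s$, and its limit is exactly Eq.~\eqref{eq:pi_q_av_mix_mat}; no regularity of $G$ enters this computation, so Godsil's identity~\eqref{eq:pi_q_av_mix_mat} holds verbatim in the general setting.

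Next I would promote Lemma~\ref{lem:pi_q_nonzero} to general quantum walks: its proof uses only that $U$ decomposes into orthogonal idempotents $F_r$ and that locality forces $\bra{j,u}U^t\ket{k,v}\neq 0$ for some $t$ whenever $G$ is connected — both facts survive the passage to a general quantum walk — so $\pi^q_{\psi(0)}$ again has strictly positive entries. Since $G$ is connected and $\pi^q_{\psi(0)}$ has full support, there is an ergodic Markov chain $P$ on $G$ with stationary distribution $\pi^q_{\psi(0)}$ (e.g.\ a lazy Metropolis chain), and applying Proposition~\ref{prop:ATS_lift} to it yields a $d$-lifted chain on $n^2 D(G)$ vertices whose marginal equals $\pi^q_{\psi(0)}$ exactly after $D(G)$ timesteps. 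This is the claimed lifted chain.

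For the runtime I would separate the two costs. Computing the $d$-lifting depends only on $G$ and the target $\pi^q_{\psi(0)}$, not on the quantum walk, so Lemma~\ref{lem:d_lift_runtime} gives $O(n^4 D(G))=O(n^5)$ unchanged. Computing $\pi^q_{\psi(0)}$ via~\eqref{eq:pi_q_av_mix_mat} is the same counting as in Lemma~\ref{lem:pi_q_runtime}, now with ambient dimension $mn$ and $m=O(n)$: the spectral decomposition of $U$ costs $O((mn)^3)$, and summing $\bra{\psi(0)}F_r D_v F_r\ket{\psi(0)}$ over the at most $mn$ idempotents and the $n$ vertices costs $O((mn)^4)$, for $O(n^8)$ overall; adding the two contributions leaves $O(n^8)$. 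The only real work is the bookkeeping of the middle paragraph — confirming that regularity and the coined form were genuinely unused in the proofs of Lemmas~\ref{lem:pi_q_nonzero} and~\ref{lem:pi_q_runtime} and in~\eqref{eq:pi_q_av_mix_mat}; everything after that is a direct copy of Theorem~\ref{thm:main_result}.
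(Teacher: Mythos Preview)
Your proposal is correct and follows essentially the same approach as the paper: the paper's own proof of the corollary is a one-liner that simply takes $m=O(n)$, $D(G)=O(n)$ and remarks that the proofs of Lemmas~\ref{lem:pi_q_nonzero} and~\ref{lem:pi_q_runtime} go through for general quantum walks. Your write-up just makes explicit the verifications the paper leaves implicit (well-definedness of $\pi^q_{\psi(0)}$ via the spectral identity, and the existence of an underlying chain with that stationary distribution), so there is no substantive difference in strategy.
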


Here, we take $m = O(n)$, $D(G) = O(n)$ and notice that the proofs of Lemmas~\ref{lem:pi_q_nonzero}~and~\ref{lem:pi_q_runtime} hold for general quantum walks also.

\section{Discussion and Open Questions} \label{sec:discussion}

We have demonstrated that if one wants to use a quantum walk for its mixing properties, i.e.\  use the quantum walk to sample from a given probability distribution, then there always exists a lifted Markov chain that mixes in time upper bounded by the number of vertices in the graph.
Moreover, we can compute the transition probabilities for the lifting in polynomial time.
The lifted Markov chain takes place on a state space that is polynomially larger than in the quantum case.
In some sense this gives us an upper bound on the amount of classical resources that are needed to simulate a quantum walk with a classical random process.

This work suggests a number of open questions for further research.
Some key questions to be answered are
\begin{itemize}
    \item Is the $d$-lifting optimal in terms of the number of states and computational complexity for achieving diameter-time mixing?
    \item What resources do we need to give a quantum walk to achieve diameter-time mixing in general? Is this possible? 
    Perhaps fewer resources than the $d$-lifting uses are required.      
\end{itemize}    

The first question could perhaps be approached in the first case by pruning vertices in the $d$-lifting that are unnecessary.
The second question would require engineering some kind of `quantum lifting', after suitably defining such a lifting.
In this case it could be possible to see diameter-time mixing with fewer computational resources consumed than in the classical case.
On the other hand this might be impossible, which would be more interesting still.

\section*{Acknowledgements}

The author thanks Simone Severini, their PhD adviser, for suggesting the direction of this work and many useful comments on the manuscript, also providing the references~\cite{AST17-a, Richter2007}.
M$\bar{\mbox{a}}$ris Ozols has provided extensive commentary on the manuscript, alongside anonymous reviewers, for which the author is very grateful.
The author also thanks Alberto Ottolenghi, Leonard Wossnig, Andrea Rochetto and Joshua Lockhart for useful discussions.
This work was supported by the EPSRC Centre for Doctoral Training in Quantum Technologies.

\bibliographystyle{utils/nicebib-alpha}
\bibliography{lifting_and_quantum_walks.bib,extra_refs.bib}
\end{document}